\newcommand{\eps}{\varepsilon}
\DeclareMathOperator{\cut}{cut}
\DeclareMathOperator{\vol}{vol}
\DeclareMathOperator{\Neigh}{Neigh}
\DeclareMathOperator{\flow}{\textit{flow}}
\newcommand{\alg}[1]{\textsc{#1}}
\icmltitlerunning{A Simple and Strongly-Local Flow-Based Method for Cut Improvement}
\begin{document} 

\twocolumn[
\icmltitle{A Simple and Strongly-Local Flow-Based Method for Cut Improvement}

\icmlauthor{Nate Veldt}{lveldt@purdue.edu}
\icmladdress{Mathematics Department, Purdue University, West Lafayette, IN 47906}
\icmlauthor{David F. Gleich}{dgleich@purdue.edu}
\icmladdress{Computer Science, Purdue University, West Lafayette, IN 47906}
\icmlauthor{Michael W. Mahoney}{mmahoney@stat.berkeley.edu}
\icmladdress{International Computer Science Institute and Dept. of Statistics, University of California at Berkeley, Berkeley, CA 94720}
\icmladdress{}

\icmlkeywords{conductance, machine learning, ICML}

\vskip 0.3in
]

%
% To discuss
% Runtime vs. LocalImprove
% Parameter selection
% Show a few additional figures in the 
% supplement too.
% Discuss runtime of our method with Gurobi
% 

%\begin{abstract} 
%We introduce and analyze a new algorithm for graph-based learning called \alg{SimpleLocal}, which locally finds good conductance cuts near a set of seed vertices a graph. Our algorithm is the latest in a class of flow-based cut improvement methods which includes the \alg{Improve} algorithm and a strongly-local version of the latter called \alg{LocalImprove}. We begin with a new result which relates the min-cut objective solved by \alg{LocalImprove} to an $L_1$-regularized variation of the min-cut objective of \alg{Improve}. In the context of this result, we present \alg{SimpleLocal}, an easy-to-implement method for computing the objective of \alg{LocalImprove}. The effectiveness and scalability of our algorithm is derived from a simple three-stage framework for localized maximum flow computations. We demonstrate the power of \alg{SimpleLocal} by illustrating its ability to quickly solve large-scale segmentation problems on an MRI scan.
%\end{abstract}

\begin{abstract} 
Many graph-based learning problems can be cast as finding a good set of vertices nearby a seed set, and a powerful methodology for these problems is based on maximum flows. We introduce and analyze a new method for locally-biased graph-based learning called \alg{SimpleLocal}, which finds good conductance cuts near a set of seed vertices. An important feature of our algorithm is that it is strongly-local, meaning it does not need to explore the entire graph to find cuts that are locally optimal. This method solves the same objective as existing strongly-local flow-based methods, but it enables a simple implementation. We also show how it achieves localization through an implicit $\ell_1$-norm penalty term. As a flow-based method, our algorithm exhibits several advantages in terms of cut optimality and accurate identification of target regions in a graph. We demonstrate the power of \alg{SimpleLocal} by solving problems on a 467 million edge graph based on an MRI scan.
\end{abstract}

\section{Introduction and Related Work}
\label{sec:intro2}

Finding good conductance cuts near a set of seed vertices in a graph is a well-studied and widely-applied problem in graph-based learning. Such an algorithm is strongly-local if its runtime depends only on the size of the seed set or on the size of the output set rather than on the size of the entire graph. Seeded PageRank and other spectral and random-walk methods give these guarantees. While these methods provide nice approximation guarantees, they fail to give optimal solutions and often exhibit ``sloppy'' boundaries when it comes to solving label propagation and community detection problems.

Flow-based methods exhibit numerous advantages including the ability to provide exact optimal solutions in some cases. The first strongly local flow-based method was introduced by \citet{Orecchia-2014-flow}, which exhibits a fast runtime but relies on a complicated variation of Dinic's algorithm for maximum flows, making it difficult to use in practice. In this paper we provide a new strongly-local algorithm that provides the same optimality guarantees while offering the flexibility of employing \emph{any} max-flow algorithm as a subroutine. While our algorithm's theoretical runtime is weaker than \citet{Orecchia-2014-flow}, we provide implementation details for it and demonstrate its ability to find low-conductance cuts in a large real-world dataset.

\paragraph{Graph-based learning.}
% A Simple Approach to Localized Flow-Based Methods for Cut Improvement
Graph-based learning is a recurring problem in machine learning where we are given a graph and some information about the nodes of this graph and the task is to infer the information on the unlabeled nodes. This is an instance of semi-supervised learning~\cite{Blum-2001-mincuts,Zhu-2003-diffusion} or transductive learning~\cite{Joachims-2003-transductive}. Algorithms for these problems on graphs are often called \emph{label propagation} methods due to their interpretation as spreading labels around a graph~\cite{Zhu-2003-diffusion,Fujiwara-2014-label-propagation}. Related problems include guided image segmentation~\cite{Mahoney-2012-local} and seeded community detection~\cite{andersen2006-communities,Kloumann-2014-seeds}, where we are given a set of sample pixels or nodes and the goal is to find the rest. 

Algorithms for graph-based learning largely split into three types: flow-based methods, spectral methods, and graph-based heuristics. Some of the seminal papers in semi-supervised learning on graphs and community detection discussed using minimum cuts in the network for this application~\cite{Blum-2001-mincuts,Flake-2000-communities}.  Subsequent papers found that \emph{spectral methods} had a number of advantages in terms of speed, unique solutions, and additional information about the \emph{strength} of the prediction~\cite{Zhu-2003-diffusion,Joachims-2003-transductive,Zhou-2003-semi-supervised}. Principled heuristic methods also abound~(e.g.~\citealt{Fujiwara-2014-label-propagation}) due to the simplicity of the setup.

%A common interpretation of these problems is in terms of the machine learning problems, but algorithmically they are formulated in a way different than many typical graph algorithms. A typical algorithm, for example Kruskal's minimum spanning tree method, depends on optimizing the minimum over the entire graph structure and the result is over the entire graph structure. Algorithms for semi-supervised problems take as input a graph and an exogenously specified seed set of nodes. These may be considered \emph{weakly-local} in the sense that the problem structure, algorithm, and running time also depend on the entire graph, but the methods return results that are biased towards a small, local part of the graph nearby the seed or reference set.   For instance, solving the linear system involved in \citet{Zhou-2003-semi-supervised} returns a modestly sized set of nodes where the labels are expected to be located, but the linear system involves the entire graph.

Semi-supervised learning algorithms differ from typical graph algorithms in that they exhibit special locality properties. Typical graph algorithms, for example Kruskal's minimum spanning tree method, often depend on optimizing an objective function over the entire graph structure and return a result proportional to the size of the entire graph. In contrast, semi-supervised learning algorithms take as input an exogenously specified seed set of nodes, and return results that are biased towards a small part of the graph nearby the seed or reference set. If the running time is still dependent on the entire size of the graph, this is called \emph{weak locality.} For instance, solving the linear system involved in \citet{Zhou-2003-semi-supervised} returns a modestly sized set of nodes where the labels are expected to be located, but the linear system involves the entire graph.

For spectral methods, \citet{Spielman-2013-local} and \citet{andersen2006-local} have shown much stronger results. These algorithms are \emph{strongly-local}, in that the algorithm doesn't even access most of the nodes of the graph and thus the running time is dependent on the size of the seed set or output set, rather than the entire graph. Importantly, not only are these strongly-local spectral algorithms very fast (in both theory and practice, see~\citealt{Leskovec-2009-community-structure} and \citealt{Jeub-2015-locally}) but, when interpreted as graph partitioning methods, they come with locally-biased Cheeger-like quality of approximation guarantees with respect to the conductance objective. (Good conductance means small conductance; we define conductance later.)

In comparison, flow-based methods can be shown to \emph{optimally} solve the discrete partitioning objective, such as minimum conductance cut, if they are given an input that is not too large and that contains the desired set~\cite{Lang-2004-mqi} through a parameteric flow construction~\cite{Gallo-1989-parametric-flow}. A subsequent flow-based algorithm called \alg{Improve} had related exactness guarantees on the discrete objective but considerably relaxed the requirements on the input set~\cite{andersen2008-improve}. In the context of semi-supervised learning, the \alg{Improve} algorithm is a weakly-local method that would essentially find the optimal conductance set of nodes in the graph that is nearby the set of the seed labels. (We make a precise statement in Section~\ref{sec:improve}.) In fact, \alg{Improve} is essentially a flow-based analog of the spectral method used by \citet{Zhou-2003-semi-supervised} as shown by \citet{Gleich-2015-robustifying} using ideas from \citet{Mahoney-2012-local}. Recently, \citet{Orecchia-2014-flow} proposed a method called \alg{LocalImprove} that combined a slight modification to the discrete objective function in the \alg{Improve} algorithm with a variation on Dinic's algorithm for max-flow~\cite{Dinitz-1970-max-flow} in order to assemble the first flow-based method that is strongly-local. 
 
%\subsection{Related work}

\paragraph{Strengths and weaknesses with spectral methods for graph-based learning and some fixes.}
In recent work, spectral methods have been found to have \emph{substantially} better theoretical guarantees, more akin to the guarantees of flow-based methods, if the resulting set of labels is \emph{very well-connected internally}~\cite{Zhu-2013-local}. However, when that doesn't hold, spectral methods tend to produce \emph{sloppy boundaries} unless the boundaries between labels is extremely clear. Figure~\ref{fig:sloppy} illustrates this effect on a simple synthetic construction. Given initial identical-labeled nodes, the spectral method~\cite{Zhou-2003-semi-supervised} diffuses over the boundary between groups too quickly and results in a potential misclassification. In contrast, the flow-based method (\alg{SimpleLocal}) is able to correctly identify the boundary given the same seeds. 

This example reflects a simplified scenario without the variety of fixes that are commonly used in spectral-based semi-supervised learning~\cite{Joachims-2003-transductive,Zhou-2011-error,Lu-2012-l1-ssl,Bridle-2013-p-Laplacians}. However, these spectral and strongly-local spectral methods have complicated theory with many parameters and options. This can make it difficult for non-experts to use, and it can be difficult to know what those fixes and strongly-local approximations  are optimizing exactly.  

% took out reference to Nadler to make the above section much shorter
% also removed:

%Motivated by this, \citet{Gleich-2014-alg-anti-diff} showed that the strongly-local analog of the \citet{Zhou-2003-semi-supervised} spectral method optimized an $\ell_1$-regularized version of that objective exactly. This insight, in turn, can be used for improved graph construction in data analysis and machine learning~\cite{Gleich-2015-robustifying}. 

\paragraph{Cut improvement.}
Cut improvement is a problem framework where we are given an initial partitioning of a graph into two pieces and the goal is to identify a \emph{better} split according to some quotient of cut and size. Both conductance and its relative, \emph{quotient cut}, fit into this general framework. Algorithms here date back to initial work on parametric maximum flow~\cite{Gallo-1989-parametric-flow}. There are a variety of methods that use the submodular property of the objective and various decompositions of submodular functions to solve them~\cite{Patkar-2003-improving,Narasimhan-2007-local-search}. The original flow-based methods \alg{MQI} and \alg{Improve} had attractive theoretical guarantees and empirical performance for this task~\cite{Lang-2004-mqi,andersen2008-improve}.
The tendency of spectral methods to make errors around the boundary was also recognized in this literature~\cite{Lang2005-spectral-weaknesses} and these flow-based methods were already a well-known fix. 
Along these lines, \citet{Chung-2007-local-cuts} provided a spectral analogue of \alg{MQI} and \citet{Mahoney-2012-local} provided \alg{MOV}, a weakly-local version of spectral graph partitioning, which may be interpreted as a spectral analogue of \alg{Improve}.
In fact, \citet{Gleich-2015-robustifying} showed that the \alg{MOV} method is optimizing the same solution as the semi-supervised method of \citet{Zhou-2003-semi-supervised}.
%Along these lines, \citet{Mahoney-2012-local} provided a weakly-local version of spectral graph partitioning, which may be interpreted as a spectral analogue of \textsc{Improve} as well as the semi-supervised spectral method of \citet{Zhou-2003-semi-supervised} as shown by \citet{Gleich-2015-robustifying}.

\paragraph{Summary of Contributions.}
In this paper, we first show that the modification to the \alg{Improve} objective used in the strongly-local \alg{LocalImprove} can be precisely stated as an $\ell_1$-penalized \alg{Improve} objective (Theorem~\ref{thm:l1}).  This makes precise the sense in which \alg{LocalImprove} \emph{implicitly} optimizes a sparsity-induced regularized version of \alg{Improve}. The authors of \alg{LocalImprove} made use of a sophisticated white-box modification of Dinic's algorithm to provide the best possible runtime. Our main contribution is then a new strongly-local flow algorithm that uses existing max-flow algorithms as a black box (Algorithm~\ref{alg:3SF}, Theorem~\ref{thm:maxflow}). Our algorithm solves the same optimization problem as \alg{LocalImprove} and our aim is to provide an algorithm that is both flexible and easy to implement while still being strongly-local. Thus we call our algorithm \alg{SimpleLocal}.

In combination, these two results deconvolve the origin of the local solution, which occurs because of the the $\ell_1$-regularization applied to the problem, from the algorithm that identifies this local solution. The second result enables us to create an extremely simple and scalable implementation of \alg{SimpleLocal} where \emph{any} max-flow algorithm can be used to solve this sequence of problems including efficient GPU methods~\cite{He-2010-gpu-max-flow} or the latest exact theoretical algorithms~\cite{Orlin-2013-max-flow}. We use this implementation to show a few examples of how this new method is able to solve problems on graphs originating from MRI data with 467 million edges in a few minutes.
%%Our main contribution in this paper is a novel algorithmic framework for a particular minimum cut problem that arises in semi-supervised learning, seeded community detection, and image segmentation.  This framework enables us to produce an easy-to-use algorithm to solve these minimum cut problems \emph{locally}---that is to say, without necessarily examining the entire input graph. We call this an algorithmic framework because it involves a sequences of maximum-flow/minimum-cut problems whose size are bounded independent of the entire graph, but where \emph{any} implementation of a max-flow algorithm can be used. This framework enables us to substitute efficient GPU methods~\cite{He-2010-gpu-max-flow} or the latest exact theoretical algorithms~\cite{Orlin-2013-max-flow} (due to a few reasons we discuss in the conclusion, we are unable to use the approximate nearly-linear time methods) as appropriate in real-world implementations. In comparison with existing work on local algorithms for these minimum cut problems~\cite{Orecchia-2014-flow}, our work results in a straightforward method that is easy to implement and extend.

%Along these lines, \citet{Chung-2007-local-cuts} provided a spectral analogue of MQI and \citet{Mahoney-2012-local} provided a weakly-local version of spectral graph partitioning, which may be interpreted as a spectral analogue of \textsc{Improve}. 

\begin{figure}[t]
	\centering
	\subfigure[Graph-based learning with a spectral method]{\includegraphics[trim={0cm 6cm 6cm 0},clip,width=0.45\linewidth]{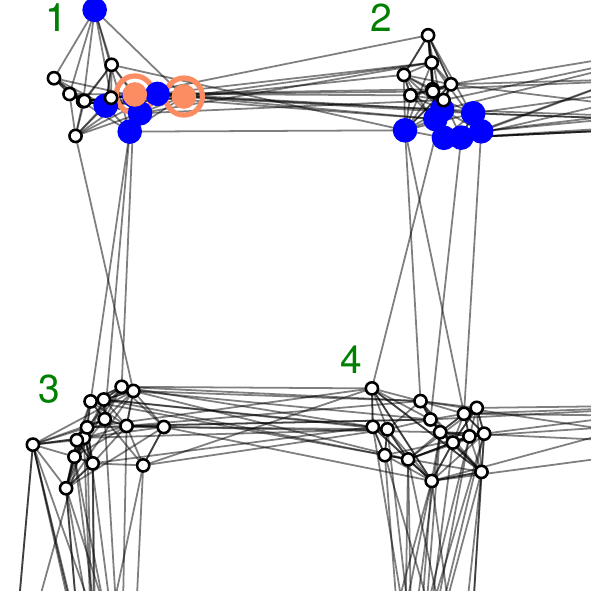}}%
	\hfill
	\subfigure[Graph-based learning with a local flow method]{\includegraphics[trim={0cm 6cm 6cm 0},clip,width=0.45\linewidth]{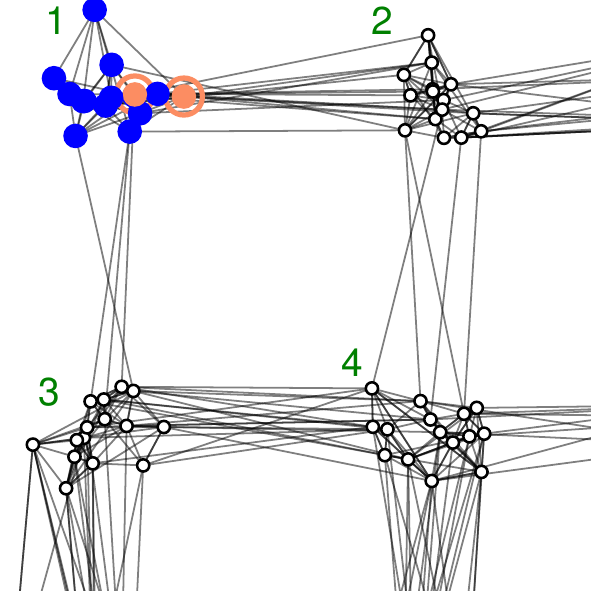}}
	\caption{There are four synthetic labels in displayed region of the graph, one for each group. The vertices are connected based on nearest neighbors and sparsified longer range connections. Using the two seed nodes marked in orange, the spectral prediction, the blue nodes in (a), move to the adjacent group whereas the flow predictions, the blue nodes in (b), accurately capture the true boundary.}
	\label{fig:sloppy}
\end{figure}

% Other interesting papers
% https://cs.nyu.edu/~fergus/papers/fwt_ssl.pdf
% http://arxiv.org/pdf/1110.3109.pdf
%   This one also observes errors with diffusion methods
% 

% The importance of conductance and flow in machine learning
% 

% In the case of a large number of classes, it is unlikely that each class will label a large number of vertices. This is the case, for instance, in attribute prediction methods on social networks.~\cite{Protein function} or community detection. 

% Thus, a general strategy would be: use a spectral, or alternative method, to generate a superset of what is likely to be the. Spectral methods based on random walks commonly have high recall values for problems such as community detection. Then use a local flow-based method to improve the set. (If spectral scores on this set are required, these may be obtained by a final localized eigenvector~\cite{XXX}.}

\section{Preliminaries and Notation}
Let $G = (V,E)$ be an undirected, unweighted graph with $n = |V|$ nodes and $m = |E|$ edges. For a given vertex $v \in V$, the degree $d_v$ is equal to the sum of edges incident to $v$, and the volume of a subset of nodes $S \subset V$ is defined to be $\vol(S) = \textstyle \sum_{s \in S} d_s.$
Given two subsets of vertices $A$ and $B$, we indicate the set of edges between them by \[\cut(A,B) = \cut(B,A) = \{(i,j) \in E: i \in A, j \in B\}.\]
We associate every vertex set $S \subset V$ with the set of edges between $S$ and the rest of the graph, $\cut(S) = \cut(S, \bar{S})$, where $\bar{S} = V\backslash S$.  We use $\partial S= |\cut(S,\bar{S})|$ to indicate the number of edges in this set.
Let $\Neigh(S)$ indicate the nodes that are not included in $S$ but share an edge with $S$,
\[ \Neigh(S) = \{v \in \bar{S}: (v, s) \in E \text{ for some } s \in S\}. \]

We measure how well-connected the set $S$ is by its conductance $\phi(S)$, defined by
\[ \phi(S) = \tfrac{\partial S}{\min\{\vol(S), \vol(\bar{S})\}}. \]
%The conductance of $S$ corresponds to the probability of leaving the set by following a randomly chosen endpoint in $S$.

\section{Implicit Sparsity Regularization}

In this section we present a new result which relates the objectives of \alg{Improve} and \alg{LocalImprove}. We begin by reviewing the construction of the \emph{augmented graph} of \citet{andersen2008-improve} used in \alg{Improve}, and the modification of this graph introduced by~\citet{Orecchia-2014-flow}. We relate both of these graphs back to our original problem by showing how low-capacity cuts in the augmented and modified augmented graphs correspond to low-conductance cuts in the original input graph. Our main result in this section is to show that the min-cut objective solved by \alg{LocalImprove} is implicitly equivalent to a sparsity-inducing $\ell_1$-regularization of the min-cut objective solved by \alg{Improve}. 
%By \emph{sparsity-inducing} we mean that the solution vector of \alg{LocalImprove} is only nonzero on a much smaller subset of nodes than the solution for \alg{Improve}. 
This result guides our understanding of strongly-localized flow-based cut improvement methods, and sheds light on the success and robustness of algorithms such as \alg{LocalImprove} and \alg{SimpleLocal}. This is also an example of algorithmic anti-differentiation~\cite{Gleich-2014-alg-anti-diff} where we characterize the optimization problem that \alg{LocalImprove} was implicitly solving as a result of their algorithmic setup. 

%To motivate our main theoretical result, we first review the \emph{augmented graph} of \citet{andersen2008-improve} used in \alg{Improve}, and the modification of this graph introduced by~\citet{Orecchia-2014-flow}. We review how low-capacity cuts in these graphs correspond to low-conductance sets in the original input graph. We end the section with a novel result that states that solving the minimum $s$-$t$ cut problem on the Orecchia and Zhu modified graph is exactly equivalent to solving a regularized version of the minimum $s$-$t$ cut objective for a related, unmodified, augmented graph. 

\subsection{\alg{Improve} and the Augmented Graph}
\label{sec:improve}

We begin with a graph $G = (V,E)$, an initial seed set $R\subset V$ satisfying $\vol(R) \leq \vol(\bar{R})$, and a parameter $\alpha \in (0,1)$. The \emph{augmented graph} $G_R(\alpha)$ is constructed through the following steps:

\begin{compactenum}
\item Retain original nodes, edges, and edge weights of $G$
\item Add a source node $s$ and a sink node $t$ 
\item For every $r\in R,$ add an edge $(s,r)$ with capacity $\alpha d_r$ 
\item For every $v \in \bar{R}$, add edge $(v,t)$ with capacity $\alpha f(R) d_v.$
\end{compactenum}
Here $f(R) = \frac{\vol(R)}{\vol(\bar{R})}$ is chosen so that the total capacity into the sink equals the total capacity out of the source. See Figure~\ref{fig:auggraph} for a visualization of a small augmented graph.

\begin{figure}[tbp] %  figure placement: here, top, bottom, or page
   \centering
   \includegraphics[width=1.7in]{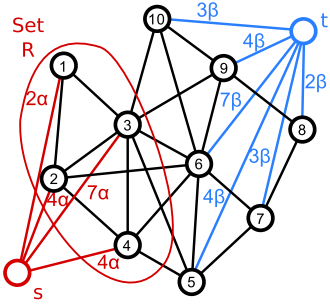} 
   \caption{An illustration of the augmented graph $G_R(\alpha)$  used by \alg{Improve}, where $\beta = \alpha f(R)$.  If we change the sink-side weight so that $\beta = \alpha(f(R) + \delta)$, this corresponds to the modified augmented graph $G'_R(\alpha,\delta)$ used by \alg{LocalImprove} and our \alg{SimpleLocal}. }
   \label{fig:auggraph}
\end{figure}
An $s$-$t$ cut of the augmented graph is any set of edges which, when cut, partitions the nodes into disjoint sets where the source and sink are in separate sets. Any $s$-$t$ cut can be associated with the set of nodes $S \subset V$ from the original graph  that are on the same side of the cut as $s$. The capacity of any $s$-$t$ cut of $G_R(\alpha)$ is equal to 
\begin{equation} \label{eq:objfn}
\alpha \vol(R)+  \partial S - \alpha \vol(R\cap S) + \alpha f(R) \vol(\bar{R}\cap S),
\end{equation}
where $S$ is the set of edges on the source side. 

The \alg{Improve} algorithm solves a sequence of minimum $s$-$t$ cut problems on $G_R(\alpha)$ for decreasing values of $\alpha$.  This is exactly equivalent to solving a sequence of optimization problems where we minimize the objective \eqref{eq:objfn} over sets $S \subset V.$ The goal is to find the smallest $\alpha$ such that the minimum $s$-$t$ cut (i.e. the optimal $S$) is less than $\alpha \vol(R)$. Note that regardless of $\alpha$ we can always achieve a cut with capacity $\alpha \vol(R)$ by selecting all edges from the source to the rest of the graph. 
This type of problem is an instance of a parametric max-flow~\cite{Gallo-1989-parametric-flow}; although those techniques are unnecessary for \alg{Improve}.
 
%The max-flow min-cut theorem guarantees that minimum $s$-$t$ cut of $G_R(\alpha)$ is equivalent to the maximum flow that can be routed from $s$ to $t$. \alg{Improve} therefore operates by finding a maximum $s$-$t$ flow and returning the set of nodes that are still connected to $s$ by a chain of unsaturated edges in the residual graph.

\subsection{LocalImprove and the Modified Augmented Graph}

The \emph{modified augmented graph} $G'_R(\alpha,\delta)$ is obtained by increasing the weight of the edges from $\bar{R}$ to $t$ to $\alpha \eps d_w$ for all nodes $w\in \bar{R},$ where $\eps = f(R) + \delta$ for $\delta \geq 0.$ Finding the minimum $s$-$t$ cut of $G'_R(\alpha,\delta)$ is equivalent to minimizing a slightly modified objective:
\begin{equation}
 \begin{aligned} & \alpha \vol(R)+  \partial S - \alpha \vol(R\cap S) + \alpha f(R) \vol(\bar{R} \cap S) \\ & \qquad + \alpha \delta \vol(\bar{R} \cap S). \end{aligned}
\end{equation}

By including the extra term $\alpha \delta \vol(\bar{R}\cap S)$, this in effect increases the penalty of including nodes outside of $R$. Note that $G_R(\alpha) = G'_R(\alpha,0).$ 

Cuts in both $G_R(\alpha)$ and $G'_R(\alpha,\delta)$ are easily associated with sets of vertices in the original graph $G$. Given a max $s$-$t$ flow on either graph, the set of nodes reachable from $s$ via unsaturated edges (excluding $s$ itself) forms a subset $S \subset V$ in the original graph $G$.

\alg{LocalImprove} finds a set $S$ with low conductance by solving a sequence of \emph{approximate} max flow computations on $G'_R(\alpha,\delta)$ for different values of $\alpha$. The method relies on modifying Dinic's max-flow algorithm and a procedure for finding blocking flows on a subgraph of $G'_R(\alpha,\delta)$ called the \emph{local graph}. The local graph is updated and expanded as needed at each step to allow more flow to be routed from $s$ to $t$. For more details we direct the reader to~\citet{Orecchia-2014-flow}.

%\subsection{Relating Cuts in $G'_R(\alpha,\delta)$ to Low-Conductance Sets}
\subsection{Relating Cuts in Modified Augmented Graph to Low-Conductance Sets in the Original Graph}

The following lemma is a generalization of a core result of~\citet{andersen2008-improve} and is proven as a part of Lemma 3.1 in~\citet{Orecchia-2014-flow}.

\begin{lemma}
 If the minimum $s$-$t$ cut of $G'_R(\alpha,\delta)$ for $\delta \geq 0$ is less than $\alpha \vol(R),$ then $\phi(S) < \alpha$, where $S$ is the node set corresponding to the cut.
\end{lemma}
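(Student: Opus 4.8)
The plan is to argue entirely from the closed-form capacity of an $s$-$t$ cut of $G'_R(\alpha,\delta)$. A cut whose source side corresponds to the vertex set $S\subset V$ has capacity
\[
\alpha\vol(R)+\partial S-\alpha\vol(R\cap S)+\alpha\bigl(f(R)+\delta\bigr)\vol(\bar R\cap S),
\]
which is exactly the modified objective displayed above; in particular this holds for the minimum cut. The hypothesis that the minimum cut is less than $\alpha\vol(R)$ cancels the leading $\alpha\vol(R)$ term, leaving
\[
\partial S<\alpha\vol(R\cap S)-\alpha\bigl(f(R)+\delta\bigr)\vol(\bar R\cap S).
\]
Since $\phi(S)=\partial S/\min\{\vol(S),\vol(\bar S)\}$, it now suffices to establish the two separate bounds $\partial S<\alpha\vol(S)$ and $\partial S<\alpha\vol(\bar S)$.

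The first bound is immediate: as $\delta\ge 0$, $f(R)\ge 0$ and $\vol(\bar R\cap S)\ge 0$, the right-hand side above is at most $\alpha\vol(R\cap S)\le\alpha\vol(S)$. The second bound, $\partial S<\alpha\vol(\bar S)$, is the crux, and it is precisely here that the assumption $\vol(R)\le\vol(\bar R)$ and the specific choice $f(R)=\vol(R)/\vol(\bar R)$ must be used (this is the content beyond the easy direction). Dropping only the $\delta$ term and substituting for $f(R)$, I would write $\partial S<\alpha\bigl(\vol(R\cap S)-\tfrac{\vol(R)}{\vol(\bar R)}\vol(\bar R\cap S)\bigr)$, introduce the four volume pieces $a=\vol(R\cap S)$, $b=\vol(\bar R\cap S)$, $c=\vol(R\cap\bar S)$, $e=\vol(\bar R\cap\bar S)$, so that $\vol(R)=a+c$, $\vol(\bar R)=b+e$, $\vol(\bar S)=c+e$, and reduce the desired inequality to $\tfrac{ae-bc}{\,b+e\,}\le c+e$. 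Clearing the (positive) denominator turns this into $e(a-b-c-e)\le 2bc$, which follows from $a+c\le b+e$ (hence $a-b-c-e\le-2c$) together with nonnegativity of $b,c,e$. The degenerate cases $S=\emptyset$ and $S=V$ (and $\vol(\bar R)=0$) are ruled out directly, since each forces the cut capacity to be at least $\alpha\vol(R)$, contradicting the hypothesis; so $\min\{\vol(S),\vol(\bar S)\}>0$ and the conclusion $\phi(S)<\alpha$ follows by combining the two bounds.

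I expect the main obstacle to be exactly that second inequality: the $\vol(S)$ side uses essentially nothing, whereas bounding $\partial S$ by $\alpha\vol(\bar S)$ genuinely needs both the sink-side scaling $f(R)$ and the seed-set size condition, and the only nontrivial work is the bookkeeping with the four volume pieces. A useful sanity check is that setting $\delta=0$ recovers the original \alg{Improve}/Andersen--Lang statement, and that the extra term $\alpha\delta\vol(\bar R\cap S)$ only tightens the first inequality, so it cannot hurt the argument.
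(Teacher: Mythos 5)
Your proposal is correct and follows essentially the same route as the paper's proof: cancel the constant $\alpha\vol(R)$, reduce to showing $\vol(R\cap S)-\eps\vol(\bar R\cap S)\le\min\{\vol(S),\vol(\bar S)\}$, dispose of the $\vol(S)$ side trivially, and use $f(R)=\vol(R)/\vol(\bar R)$ together with $\vol(R)\le\vol(\bar R)$ for the $\vol(\bar S)$ side. The only difference is cosmetic: where you verify the second bound by clearing denominators with four volume pieces (a computation that checks out, reducing to $e(a-b-c-e)\le 2bc$), the paper gets there more quickly via $\vol(R\cap S)-f(R)\vol(\bar R\cap S)=-\vol(R\cap\bar S)+f(R)\vol(\bar R\cap\bar S)\le f(R)\vol(\bar R\cap\bar S)\le\vol(\bar S)$, using $\vol(R)=f(R)\vol(\bar R)$ and $f(R)\le 1$.
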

(For a proof see the supplementary material.)

The task of \alg{Improve} is to find the smallest $\alpha$ such that the maximum $s$-$t$ flow of the augmented graph is less than $\alpha \vol(R).$ This is equivalent to minimizing the function 
\begin{equation} \label{eq:relquot}
%\phi_R(S) = \frac{\partial S}{\vol(R\cap S) - f(R) \vol(\bar{R} \cap S)}, 
\phi_R(S) = \partial S/(\vol(R\cap S) - f(R) \vol(\bar{R} \cap S))
\end{equation}
which~\citet{andersen2008-improve} refer to as the \emph{quotient score of} $S$ \emph{relative to the seed set} $R$. Both \alg{LocalImprove} and our new algorithm \alg{SimpleLocal} minimize a similar quotient function, %which is given in a subsequent section as equation \eqref{eq:modrelcond}.
given later as equation \eqref{eq:modrelcond}.
\subsection{Relating $s$-$t$ Cuts of the Augmented Graph and Modified Augmented Graph}

Our first new result is to show that finding a minimum $s$-$t$ cut of the modified augmented graph is equivalent to solving an $\ell_1$ sparsity-regularized version of the min-cut objective for a related, standard augmented graph. This result gives insight into why \alg{LocalImprove} succeeds in finding a cut with similar conductance guarantees to those provided by \alg{Improve}, despite only exploring a portion of the graph. This result is analogous to a similar relationship discovered in \citet{Gleich-2014-alg-anti-diff} between the Andersen, Chung, Lang procedure and an $\ell_1$-regularized $\ell_2$-version of the min-cut objective on the \alg{Improve} augmented graph~\cite{andersen2006-local}.

\begin{theorem} \label{thm:l1}
	Finding the minimum cut of the modified augmented graph $G'_R(\alpha,\delta)$ is equivalent to solving an $\ell_1$-regularized version of the minimum $s$-$t$ cut objective of a related augmented graph $G_R(\beta)$. More specifically, the set $S$ which minimizes the $s$-$t$ cut objective of $G_R'(\alpha,\delta)$:
	\begin{equation}
	\alpha \vol(R)+  \partial S - \alpha \vol(R\cap S) + \left( \alpha f(R) + \alpha \delta\right)\vol(\bar{R} \cap S)
	\label{eq:augmin}
	\end{equation}
	also minimizes the $\ell_1$-regularized objective of $G_R(\beta)$:
	\begin{equation}
	\label{eq:reg}
	\beta \vol(R) +  \partial S - \beta \vol(R\cap S) + \beta f(R) \vol(\bar{R}\cap S) + \kappa \vol(S),
	\end{equation}
	where $\kappa = \frac{\alpha \delta}{1 + f(R)} > 0$ and $\beta = \alpha + \kappa.$ 
\end{theorem}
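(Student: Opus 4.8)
The plan is to prove the statement by showing that objectives \eqref{eq:augmin} and \eqref{eq:reg} differ only by an additive constant that does not depend on $S$; since adding an $S$-independent constant does not change the set of minimizers, and since (by the discussion preceding the theorem, cf.\ \eqref{eq:objfn}) a minimum $s$-$t$ cut of each augmented graph corresponds exactly to a minimizer of its associated objective over subsets $S\subseteq V$, this gives the claimed equivalence. So the whole argument reduces to a short algebraic identity together with one bookkeeping remark about what ``equivalent'' means.

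First I would rewrite the $\ell_1$ term in \eqref{eq:reg} using the disjoint decomposition $S = (R\cap S)\,\dot\cup\,(\bar R\cap S)$, so that $\vol(S) = \vol(R\cap S) + \vol(\bar R\cap S)$. Substituting and collecting coefficients turns \eqref{eq:reg} into
\[
\beta\vol(R) + \partial S + (\kappa-\beta)\vol(R\cap S) + \bigl(\beta f(R)+\kappa\bigr)\vol(\bar R\cap S).
\]
Next I would substitute $\beta = \alpha+\kappa$, giving $\kappa-\beta = -\alpha$, which matches the $\vol(R\cap S)$ coefficient in \eqref{eq:augmin}; and then $\kappa = \tfrac{\alpha\delta}{1+f(R)}$, giving $\beta f(R)+\kappa = (\alpha+\kappa)f(R)+\kappa = \alpha f(R) + \kappa\bigl(1+f(R)\bigr) = \alpha f(R)+\alpha\delta$, which matches the $\vol(\bar R\cap S)$ coefficient in \eqref{eq:augmin}. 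The only remaining discrepancy is in the $S$-independent term: \eqref{eq:reg} carries $\beta\vol(R)$ whereas \eqref{eq:augmin} carries $\alpha\vol(R)$, and the difference $\kappa\vol(R)$ is a constant. Hence \eqref{eq:augmin} and \eqref{eq:reg} are minimized by exactly the same $S$.

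Finally I would add a sentence justifying the name ``$\ell_1$-regularized'': identifying $S$ with its indicator vector $x\in\{0,1\}^V$, we have $\vol(S) = \sum_{v} d_v x_v = \|x\|_{1,d}$, a degree-weighted $\ell_1$ norm of the indicator, and $\kappa = \tfrac{\alpha\delta}{1+f(R)} > 0$ whenever $\delta>0$, so the extra term is a genuine sparsity-promoting penalty (for $\delta=0$ one has $\kappa=0$, $\beta=\alpha$, and the statement is vacuous, consistent with $G_R(\alpha)=G'_R(\alpha,0)$). I do not anticipate a real obstacle here: the result is a self-contained algebraic identity. The only points that need care are being explicit that ``equivalent'' means ``same minimizer'' — which is exactly why the transformation is allowed to change the constant term but nothing else — and double-checking the substitution of $\beta$ and $\kappa$ so that the $\vol(\bar R\cap S)$ coefficient really collapses to $\alpha f(R)+\alpha\delta$.
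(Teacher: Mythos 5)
Your proposal is correct and follows essentially the same route as the paper: both rest on the decomposition $\vol(S)=\vol(R\cap S)+\vol(\bar R\cap S)$ and the substitutions $\beta=\alpha+\kappa$, $\alpha\delta=\kappa(1+f(R))$, with the constant terms absorbed since they do not affect the minimizer. The only cosmetic difference is direction — you expand \eqref{eq:reg} to recover \eqref{eq:augmin} while the paper rewrites \eqref{eq:augmin} into the form of \eqref{eq:reg} — and your closing remark identifying $\kappa\vol(S)$ as a degree-weighted $\ell_1$ penalty mirrors the paper's post-proof remark.
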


\begin{proof}
% We use the fact that $\vol(\bar{R}\cap S) + \vol(R\cap S)= \vol(S)$ and make the substitutions $\alpha \delta = \kappa + \kappa f(R)$ and $\alpha = \beta - \kappa.$ This gives
If we note that $\vol(\bar{R}\cap S) + \vol(R\cap S)= \vol(S)$ and substitute $\alpha \delta = \kappa + \kappa f(R)$ and $\alpha = \beta - \kappa$, we get
\begin{align*}
& -\alpha \vol(R\cap S) + (\alpha f(R) + \alpha \delta) \vol(\bar{R}\cap S)\\ 
& \qquad = -(\beta - \kappa) \vol(R\cap S) + (\beta f(R) + \kappa) \vol(\bar{R} \cap S)\\
%&= -\beta \vol(R\cap S) + \beta f(R) \vol(\bar{R}\cap S) + \kappa \vol(\bar{R}\cap S) + \kappa \vol(R\cap S)\\
&\qquad = -\beta \vol(R\cap S) + \beta f(R) \vol(\bar{R}\cap S) + \kappa \vol(S).
\end{align*}
The constant term in each objective does not affect the optimal set $S$,  so we see that~\eqref{eq:augmin} and~\eqref{eq:reg} are equivalent.
\end{proof}
\paragraph{Remark.} The additional term in the regularized objective \eqref{eq:reg} is $\kappa \vol(S)$. When this objective is converted into a linear program for min-cut, it is: 
\begin{equation}
 \begin{array}{ll} \min_{x} & \sum_{(i,j) \in E_R(\beta)} c_{i,j}(\beta) |x_i - x_j| + \kappa \sum_{i} | d_i x_i| \\ \text{s.t.} & 0 \le x_i \le 1, x_s = 1, x_t = 0, \end{array}
\end{equation}
where $c_{i,j}(\beta)$ is the capacity of each edge in the augmented graph with $\beta$ from the theorem and where $d_i$ is the degree of node $i$ in the original graph.
The extra term is then exactly an $\ell_1$ penalty. 

\section{SimpleLocal Algorithm}
%Our primary contribution is a new cut improvement algorithm called \alg{SimpleLocal}, which provides a simplified framework for computing the objective of \alg{LocalImprove}. 
Our primary contribution is the algorithm \alg{SimpleLocal}, a simplified framework for computing the objective of \alg{LocalImprove}.
Just as \alg{LocalImprove}, we rely on constructing and updating a local subgraph of $G'_R(\alpha,\delta)$. However, rather than using Dinic's algorithm to compute approximate maximum flows, we develop a new three-stage method for exact maximum flow computations on $G'_R(\alpha,\delta)$. 

%In this section we present an overview of \alg{SimpleLocal} as well as an analysis of its performance. We prove that the size of the subgraph explored by our maximum flow computations is strictly bounded in terms of the reference set $R$ and locality parameter $\delta,$ which shows our algorithm is strongly local. We finish with an updated version of a theorem of~\citet{andersen2008-improve} regarding the optimality of the set returned.
%
\subsection{Three-Stage Local Max Flow Procedure}
We begin with a detailed explanation of \alg{3StageFlow}, the newly designed algorithm we employ to compute a maximum $s$-$t$ flow of a given modified augmented graph $G_R'(\alpha,\delta)$. After constructing an initial local graph, our algorithm enters a three-stage process that is repeated until convergence to a maximum flow. In each iteration we expand the local graph, compute a small-scale maximum $s$-$t$ flow, and then update the local graph based on this flow. By iteratively growing the local graph and increasing our small-scale flow computations on it in this way, we will converge to an $s$-$t$ flow that is a maximum on all of $G'_R(\alpha,\delta)$. 

\textbf{Initialization.}
Let $G'$ denote the modified augmented graph $G'_R(\alpha,\delta)$. We begin by forming the local graph $L = (V_L,E_L)$, a subgraph of $G'$ which includes:
\begin{compactitem}
\item Nodes $\{s,t\}\cup R \cup \Neigh(R)$
\item Edges from $s$ to $R$
\item Edges between distinct nodes in $R$
\item Edges from $R$ to $\Neigh(R)$
\item Edges from $t$ to $\Neigh(R).$
\end{compactitem}
Let $F$ denote our flow vector, and $\flow(F)$ indicate the total amount of flow routed from $s$ to $t$. Initially $F$ is set to the zero vector.

\textbf{Stage 1. Expansion.} At the beginning of each new iteration we expand the local graph to allow more flow to be routed from $s$ to $t$. We use $X$ to denote the set of nodes to expand on at the beginning of an iteration. For any node $x \in X$, we add all neighbors of $x \in G'$ that are not yet a part of $L$, and also include all edges from $x$ to all its neighbors. For each new node added to $L$, we include the edge it shares with the sink $t$. In the first step we have no need to expand the local graph yet, so we set $X = \emptyset$.

\textbf{Stage 2. Max-Flow Computation.} Once $L$ is correctly expanded, we compute the maximum flow $f$ on the local graph $L$ using any available max-flow subroutine. We then update our flow vector $F \leftarrow F + f$.

Let $L_f$ denote the residual graph of the flow. This graph is formed by replacing the capacity $c_{ij}$ of an edge in $E_L$ by $c_{ij} - f_{ij}$, where $f_{ij}$ is the flow on edge $(i,j)$, and where the capacity of edge $(j,i)$ is replaced with the value $f_{ij}$. 

\textbf{Stage 3. Updates.} After computing a maximum flow, we resolve the effects of the flow and determine whether the local graph should be further expanded. We begin by updating the local graph to be the residual graph of $f$, and find the set of nodes still connected to $s$ by a chain of unsaturated edges. We refer to this as the \emph{source set} $S$. When we converge to a max flow, this is the set that is returned.

We determine the set of nodes around which to expand $L$ in the same way \alg{LocalImprove} does after computing a localized blocking flow~\cite{Orecchia-2014-flow}. The nodes to expand on are exactly those whose edge to $t$ was saturated by the flow $f$. An edge $(v,t)$ is saturated when the flow $f_{vt}$ is equal to the available capacity from node $v$ to $t$, so we determine the new expansion set $X$ as follows:
\[X \leftarrow \{v \in V_L: f_{vt} = c_{vt} \text{ for $f$} \}.\]
If $X$ is non-empty, there exists at least one node $x \in X$ in the source set that has edges and neighboring nodes not yet included in $L$. This implies more flow could be routed from $s$ to $t$ through $x$. If $X$ is empty, we will show in the next section that the current flow $F$ is optimal and we no longer need to expand the local graph. 
%In this case, $S\cup \{s\}$ is guaranteed to be the minimum $s$-$t$ cut of $G'.$ Because $X$ is empty before the first iteration, we also require that $F > 0$ in order for the algorithm to terminate.

An outline for \alg{3StageFlow} is given in Algorithm \ref{alg:3SF}. 
\begin{algorithm}[tb]
   \caption{\alg{3StageFlow}}
   \label{alg:3SF}
\begin{algorithmic}
   \STATE {\bfseries Input:} graph $G$, parameters $\alpha, \delta$, seed set $R$
   \STATE {\bfseries Initialize:} 
   \STATE $V_L:= \{R,\Neigh(R), s,t\}$
   \STATE $E_L:= \{(s, R), (R,\Neigh(R)), (\Neigh(R),t) \}$
   \STATE $F:=0$; $\quad$ $X := \emptyset$
   \WHILE{$X \neq \emptyset$ \text{ or } $F = 0$}
   \STATE \textbf{1. Expand $W$}
   \FOR{$x \in X$}
   \STATE $V_L \leftarrow V_L \cup \Neigh(x)$
   \STATE $E_L \leftarrow E_L \cup \{(x, v): v \in V_L\} \cup\{(y, t): y \in \Neigh(x)\}$
      \ENDFOR
   \STATE \textbf{2. Max Flow}: 
   \STATE$f \leftarrow \alg{MaxSTflow}(L)$; $\quad$ $F \leftarrow F + f$
   \STATE \textbf{3. Update $L$}
   \STATE $L\leftarrow L_f$; $\quad$ $S\leftarrow \emph{ source set}$
   \STATE $X \leftarrow \emph{nodes whose edge to $t$ was saturated}$
   \ENDWHILE
\end{algorithmic}
\end{algorithm}

\subsection{Convergence of \alg{3StageFlow}}
The following lemma is analogous to a result shown for the \alg{LocalImprove} algorithm~\cite{Orecchia-2014-flow}. We use it to prove that \alg{3StageFlow} converges to a maximum $s$-$t$ flow of $G'$.
\begin{lemma}
If $S$ is the set of nodes returned by \alg{3StageFlow}, then 
\[ S \subseteq R \cup P_x,\]
where $P_x$ is the set of nodes we have previously expanded on.
\end{lemma}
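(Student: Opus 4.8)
The plan is to show by induction on the iterations of the while-loop that the source set $S$ at any point in the algorithm is contained in $R \cup P_x$, where $P_x$ is the set of nodes expanded on so far. The key structural fact is that the local graph $L$ at any iteration consists only of nodes in $R \cup \Neigh(R) \cup P_x \cup \Neigh(P_x)$, and the only nodes in $L$ that have edges going ``further out'' into $G'$ (i.e.\ edges not present in $L$) are nodes in $R \cup P_x$; every other node in $L$ — namely a node in $\Neigh(R) \cup \Neigh(P_x)$ that has not itself been expanded on — has all of its $L$-edges being just the edge to $t$ and the edges back to the nodes that pulled it in. So the boundary frontier of the local graph is ``sealed'' except through nodes in $R \cup P_x$.

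**First I would** set up the invariant precisely: after Stage 3 of any iteration, every node $v \in V_L \setminus (R \cup P_x)$ satisfies that its only neighbors in $G'$ other than $t$ lie inside $V_L$ — i.e.\ $L$ contains all of $v$'s relevant structure except possibly further expansion which would only be triggered by $v$ being added to $X$, at which point $v$ enters $P_x$. Then I would argue: a node $v$ in the source set $S$ is reachable from $s$ in the residual graph $L_f$. If $v \notin R \cup P_x$, then $v$ is a frontier node whose only $G'$-edges besides edges within $V_L$ is the edge $(v,t)$. I would then use the termination condition. The algorithm only returns $S$ when $X = \emptyset$, meaning no node in $V_L$ had its edge to $t$ saturated. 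So for every frontier node $v \in V_L \setminus (R \cup P_x)$, the edge $(v,t)$ is \emph{not} saturated — there is residual capacity from $v$ to $t$. But if $v$ is also reachable from $s$ in $L_f$, then $s \leadsto v \to t$ is an augmenting path in $L_f$, contradicting the fact that $f$ was computed as a \emph{maximum} flow on $L$ (the residual of a max flow on $L$ has no augmenting path from $s$ to $t$ within $L$, and that path lies entirely within $L$). Hence no frontier node outside $R \cup P_x$ can be in $S$, giving $S \subseteq R \cup P_x$.

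**The main obstacle** will be nailing down the claim that a node reachable from $s$ in the residual graph can only be either in $R \cup P_x$ or a ``sealed'' frontier node with an unsaturated edge to $t$ — in particular, making sure the expansion procedure in Stage 1 really does add \emph{all} of an expanded node's $G'$-edges (the pseudocode does: for $x \in X$ we add $\Neigh(x)$ and all edges $(x,v)$), so that the only nodes with hidden edges are precisely the unexpanded frontier, and these all received their edge to $t$ upon being added. I also need to be careful about \emph{when} $S$ is read off relative to when $X$ is computed: both are derived from the same residual graph $L_f$ in Stage 3, so a node in $X$ (saturated to $t$) that happens to be in $S$ is exactly the signal that expansion must continue, and termination precisely rules out such nodes surviving in the returned $S$. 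A secondary subtlety is the base case: initially $P_x = \emptyset$ and $V_L = \{s,t\} \cup R \cup \Neigh(R)$, and every node of $\Neigh(R)$ is a sealed frontier node with an edge to $t$ — so after the first max-flow computation the source set is contained in $R \cup \Neigh(R)$, and any node of $\Neigh(R) \cap S$ with a saturated edge to $t$ enters $X$ and hence $P_x$ at the next expansion, keeping the invariant intact.
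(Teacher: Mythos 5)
Your proposal is correct and follows essentially the same route as the paper: assume a node of $S$ lies outside $R \cup P_x$, note that such a node must carry an edge to $t$ in the local graph, and derive a contradiction between the maximality of the local flow (which would force that edge to be saturated, since the node is reachable from $s$ by unsaturated edges) and the termination condition $X = \emptyset$ together with the node never having been expanded. The extra ``sealed frontier'' invariant you set up is not needed for this lemma (it is closer to what Theorem~2 uses), but it does no harm.
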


\begin{proof} The algorithm terminates when $X = \emptyset$ and $F > 0$. If we assume $S$ is not a subset of $R \cup P_x$, then there exists a node $x \in S$ such that $x \notin R$ and $x \notin P_x$. Because $x \in \bar{R}$, this node must share an edge in the local graph with $t$. Since $x \in S$, there is a path of unsaturated edges connecting $s$ and $x$, so in order for $f$ to be maximal the edge $(x,t)$ must be saturated. This is a contradiction, because $X = \emptyset$ implies that edge $(x,t)$ was not saturated on the most recent iteration, and $x \notin P_x$ implies we did not previously expand on $x$, meaning $(x,t)$ was not saturated in any previous iteration.
\end{proof}

We can now prove the optimality of the set returned by \alg{3StageFlow}.

\begin{theorem} \label{thm:maxflow} When  $X = \emptyset$ and $F>0$, $F$ is a maximum flow of $G'$ and $\cut(S\cup\{s\})$ is the minimum $s$-$t$ cut set of $G'$.
\end{theorem}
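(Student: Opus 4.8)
The plan is to prove both assertions at once by exhibiting an $s$-$t$ cut of $G'$ whose capacity equals $\flow(F)$; since the value of any $s$-$t$ flow never exceeds the capacity of any $s$-$t$ cut, such a coincidence forces $F$ to be a maximum flow and the cut to be a minimum one. The natural candidate is $A := S\cup\{s\}$, i.e.\ the cut $\cut(S\cup\{s\})$ appearing in the statement. The content of the proof is therefore to show that, at termination, every edge of $G'$ leaving $A$ is saturated by $F$ and every edge of $G'$ entering $A$ carries no $F$-flow, so that the flow--cut conservation identity collapses to $\flow(F)=\operatorname{cap}(\cut(A))$.

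Before that I would record a loop invariant maintained by \alg{3StageFlow}: after each pass through the three stages, (i) $F$ is a feasible $s$-$t$ flow on all of $G'$, and (ii) the current local graph $L$ is exactly the residual graph of $F$ with respect to the capacities of $G'$, restricted to the current edge set $E_L$. This is a straightforward induction on iterations: the expansion step only adds edges of $G'$ on which $F$ is identically zero (they have never appeared in $L$), so $L$ remains the $F$-residual; the max-flow step adds to $F$ a flow $f$ that is feasible in this residual, hence $F+f$ is feasible in $G'$ and the residual of $f$ within $L$ is the residual of $F+f$; and $F$ is left unchanged on every edge of $G'$ outside $E_L$. In particular, at termination $S$ is precisely the set of vertices reachable from $s$ through edges of positive residual capacity in $L$.

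The heart of the argument uses the preceding lemma, $S\subseteq R\cup P_x$, together with the construction rules. First I would observe that for every $u\in A$, all edges of $G'$ incident to $u$ already lie in $E_L$: for $u=s$ this holds by the initialization (the only edges out of $s$ go to $R$, and $s$ has no in-edges); for $u\in R$ the only incident edges of $G'$ go to $s$, to other nodes of $R$, or to $\Neigh(R)$, all added at initialization; and for $u\in P_x$ the expansion step performed on $u$ added every edge of $G'$ incident to $u$. Now take any edge $(u,v)$ of $G'$ with $u\in A$ and $v\notin A$. It lies in $E_L$, so its forward residual capacity in $L$ is $c_{uv}-F_{uv}$; were this positive, $v$ would be reachable from $s$ in $L$ (via $u$, which is $s$ or lies in $S$), forcing $v\in S\subseteq A$, a contradiction --- hence $F_{uv}=c_{uv}$. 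Symmetrically, for an edge $(u,v)$ of $G'$ with $u\notin A$ and $v\in A$ (so $v\neq s$, hence $v\in S$), the backward residual edge $v\to u$ in $L$ has capacity $F_{uv}$; if $F_{uv}>0$ this residual edge exists and makes $u$ reachable from $s$ in $L$, forcing $u\in S\subseteq A$, again a contradiction --- hence $F_{uv}=0$. Summing the flow-conservation identity over the edges of the cut $\cut(A)$ then gives $\flow(F)=\operatorname{cap}(\cut(A))$.

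Finally, $A$ is a legitimate $s$-$t$ cut: $s\in A$ by definition, and $t\notin A$ because $t\notin R\cup P_x$ and so $t\notin S$ by the preceding lemma. Weak duality then yields that $F$ is a maximum $s$-$t$ flow of $G'$ and that $\cut(S\cup\{s\})$ is a minimum $s$-$t$ cut set, which is exactly the claim. The step I expect to be the main obstacle is the interface between the locally maintained object $L$ and the global graph $G'$ --- concretely, making sure that the part of $G'$ that \alg{3StageFlow} never explores genuinely cannot interfere. This is precisely where the preceding lemma (and through it the termination condition $X=\emptyset$, $F>0$) is essential: it guarantees that the source set $S$ never reaches an unexplored region, so the ``missing'' edges of $G'$ all sit strictly on the sink side of $\cut(A)$ and are irrelevant to its capacity.
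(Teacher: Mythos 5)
Your proof is correct and follows essentially the same route as the paper's: both arguments hinge on the preceding lemma $S \subseteq R \cup P_x$ to guarantee that every edge of $G'$ incident to the source set already lies in the local graph, so that the saturated cut around $S \cup \{s\}$ and the weak-duality (max-flow/min-cut) bound transfer from $L$ to all of $G'$. Your version is merely more explicit, spelling out the loop invariant that $F$ stays feasible on $G'$ and that $L$ is its residual, and checking both the forward-saturation and backward-zero-flow conditions that the paper leaves implicit.
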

\begin{proof}
We include the requirement $F > 0$ to indicate we have not stopped before the first iteration. 
In the local graph, the set of saturated edges between $S$ and $V_L\backslash S$ defines an $s$-$t$ cut with capacity equal to the total amount of flow routed from $s$ to $t$. The capacity of any $s$-$t$ cut is an upper bound for the amount of flow that can be routed from source to sink, so we see that $F$ and $\cut(S\cup\{s\})$ are optimal in $L$.
By the above lemma, $S \subseteq R\cup P_x$. This implies that all neighbors and edges of $S$ in $G'$ are already included in the local graph $L$. Therefore, the max-flow and min-cut of the local graph is also an optimal flow and cut pair for the entire graph $G'.$
\end{proof}

\subsection{Strong-locality and Runtime Guarantee}
The explored portion of $G'$ directly corresponds to a subgraph of $G$ in the following way: if we consider the local graph $L$ and remove $s$ and $t$ and all edges incident to them, we end up with a subgraph of $G$ which we call the \emph{explored subgraph} and denote $G_{\text{exp}}$. This explored region is exactly the subgraph of $G$ that our method would need to extract to create the small max-flow problems. Our algorithm not only obtains a maximum $s$-$t$ flow on the entire graph, but we can show that it does so without exploring the entire graph.  This result substantially sharpens a related result from \citet[Theorem 1a]{Orecchia-2014-flow}.
\begin{theorem}
 Given a graph $G$, seed set $R$, and locality parameter $\delta > 0$, the \alg{3StageFlow} procedure explores a subgraph of $G$ satisfying the following bound:
\[\vol(G_{\text{exp}}) \leq \vol(R)\left( 1 + \frac{2}{\eps} \right) + \partial R, \]
where $\eps = f(R) + \delta.$
\end{theorem}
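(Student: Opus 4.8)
First I would pin down what the left-hand side counts. Since $G_{\text{exp}}$ is obtained from the local graph $L$ by deleting $s$, $t$ and all incident edges, $\vol(G_{\text{exp}})$ equals twice the number of edges of $G$ that the procedure ever inserts into $L$; so the plan is to bound $|E(G_{\text{exp}})|$ by splitting it into the edges present right after initialization and the edges introduced by the expansion steps, and then doubling.

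For the initialization: we add the edges internal to $R$, of which there are $\tfrac12(\vol(R)-\partial R)$, together with all edges from $R$ to $\Neigh(R)$, which are exactly the $\partial R$ edges of $\cut(R,\bar R)$; this is $\tfrac12(\vol(R)+\partial R)$ edges, contributing $\vol(R)+\partial R$ to the volume. For the expansions: an expansion step on a node $x$ inserts only edges incident to $x$ (together with its edge to $t$, which is discarded in $G_{\text{exp}}$), hence at most $d_x$ new edges of $G$. Letting $P$ be the set of all nodes ever placed in an expansion set $X$, the expansion steps add at most $\sum_{x\in P} d_x$ edges in total, so
\[ \vol(G_{\text{exp}}) \ \le\ \vol(R)+\partial R+2\sum_{x\in P} d_x . \]
Everything therefore reduces to proving $\sum_{x\in P} d_x \le \vol(R)/\eps$.

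For that I would argue with the flow. Every node of $P$ lies in $\bar R$, since only $\bar R$-nodes have an edge to $t$, and in $G'_R(\alpha,\delta)$ the edge $(x,t)$ has capacity $\alpha\eps d_x$. By construction a node $x$ enters $X$ exactly when its edge to $t$ becomes saturated by the accumulated flow $F$, i.e. when $F_{xt}=\alpha\eps d_x$. The crucial point is that $F_{xt}$ never decreases: one may assume without loss of generality that each call to the max-flow subroutine returns a flow carrying no flow on any directed cycle (cycles can be cancelled without changing the flow value or its maximality), and such a flow puts zero flow on the reverse arcs $(t,x)$, because no simple $s$-$t$ path can leave the sink. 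Hence once $(x,t)$ is saturated it stays saturated, so at termination $F_{xt}=\alpha\eps d_x$ for every $x\in P$. Since the $F_{xt}$ are nonnegative and sum to $\flow(F)$, and $\flow(F)$ is at most the total capacity $\alpha\vol(R)$ of the edges leaving $s$, we obtain $\sum_{x\in P}\alpha\eps d_x\le\alpha\vol(R)$, i.e. $\sum_{x\in P} d_x\le\vol(R)/\eps$ (valid since $\eps=f(R)+\delta>0$). Substituting into the bound above yields $\vol(G_{\text{exp}})\le\vol(R)(1+2/\eps)+\partial R$.

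The edge-counting bookkeeping is routine; the one step that needs care — and the main obstacle — is the monotonicity of $F_{xt}$, since a priori a later, larger flow computation could reroute flow so as to ``un-saturate'' the edge to $t$ of a node already expanded on, which would wreck the count. The acyclic-flow normalization above is what rules this out; alternatively one can invoke the analogous invariant established for \alg{LocalImprove} in \citet{Orecchia-2014-flow}. It is also worth double-checking the intended meaning of $\vol(G_{\text{exp}})$ — the internal volume $2|E(G_{\text{exp}})|$ of the extracted subgraph, not $\sum_{v\in V(G_{\text{exp}})}d_v$ — since the stated inequality holds only for the former.
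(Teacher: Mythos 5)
Your proof is correct and follows essentially the same route as the paper's: the core step in both is that each expanded node's sink edge carries $\alpha\eps d_x$ flow, so $\sum_{x\in P} d_x \le \vol(R)/\eps$, followed by routine edge bookkeeping (the paper partitions the vertices of $G_{\text{exp}}$ into $R\cup P_x\cup \Neigh(R\cup P_x)$ and bounds $\vol(Q)\le\vol(P_x)+\partial R$ rather than counting edges by when they were added, but the two counts agree). Your explicit justification that sink edges, once saturated, stay saturated (via an acyclic-flow normalization) shores up a step the paper's proof takes for granted, and your reading of $\vol(G_{\text{exp}})$ as the internal volume of the explored subgraph matches the paper's usage.
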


\begin{proof} We first bound the expanded set $P_x$. For any $\alpha$, the maximum flow of $G'$ is bounded above by $\alpha \vol(R)$, the capacity of the edges leading out of $s$. If we expand on a node $v \in L$, it means in the previous iteration the edge $(v,t) \in E_L$ was saturated, implying that $\alpha \eps d_v$ flow was routed to $t$. The total amount of flow through these expanded nodes therefore must satisfy
\[\textstyle \sum_{p \in P_x} \alpha \eps d_p \leq \alpha \vol(R),\]
which gives the bound
\[\vol(P_x) = \textstyle \sum_{p \in P_x} d_p \leq \tfrac{\vol(R)}{\eps}. \]
By the construction and update procedure of $L$, the vertex set of $G_{\text{exp}}$ is $R\cup P_x \cup Q$, where $Q = \Neigh(R\cup P_x)$. This subgraph includes all edges incident to nodes in $R\cup P_x$, but contains no edges between nodes in $Q$, the nodes around which $L$ has not been expanded. Because of this, the volume of $Q$ can be upper bounded by the volume of $P_x$ and the cut of $R$ as follows:
\[ \vol(Q) =  |\cut(Q,P_x)| + |\cut(Q,R)| \leq \vol(P_x) + \partial R. \]
%where we have used the bound
%\[|\cut(Q,R)| \leq |\cut(Q,R)| + |\cut(R,P_x)| = \partial R,\]
%and have bounded $|\cut(Q,P_x)|$ above by the volume of $P_x$.

This can be used to show the final result:
%\[ \begin{aligned}
%\vol(G_{\text{exp}}) &= \vol(R \cup P_x \cup Q) \leq \vol(R) + 2\vol(P_x) + \partial R\\
%& \leq \vol(R)(1 + \tfrac{2}{\eps} ) + \partial R.
%\end{aligned} \] \vspace{-\baselineskip}
\[ \begin{aligned}
\vol(G_{\text{exp}}) &= \vol(R) +\vol(P_x) + \vol(Q) \\ &\leq \vol(R) + 2\vol(P_x) + \partial R \leq \vol(R)(1 + \tfrac{2}{\eps} ) + \partial R.
\end{aligned} \]
\end{proof}

This result implies the following---extremely crude---strongly-local runtime guarantee. In the worst case, each flow-problem takes $O(\vol(R)^2/\eps^2)$ to solve using the algorithm from \citet{Orlin-2013-max-flow}. We have at most one flow problem for each edge in the final local graph (if we grew the local graph by one vertex at a time, we must get at least one edge), giving an overall strongly-local bound of $O(\vol(R)^3/\eps^3)$. This is highly conservative and worse than the bound on \alg{LocalImprove} from \citet{Orecchia-2014-flow}; we expect real-world runtimes to be substantially faster. 

\subsection{Full Outline of SimpleLocal}

Given a graph $G$ with reference set $R$, \alg{SimpleLocal} finds a good conductance cut %near $R$ by choosing a locality parameter $\delta > 0$ and
by repeatedly calling \alg{3StageFlow} to find the smallest $\alpha$ such that the maximum $s$-$t$ flow of $G'_R(\alpha,\delta)$ is less than $\alpha \vol(R).$ 
%If $\alpha^*$ is this minimum value, then simple local returns a set $S^*$ with $\phi(S^*) = \alpha^*$ (See Algorithm \ref{alg:SL}).
\begin{algorithm}[H]
\caption{\alg{SimpleLocal}}
\label{alg:SL}
\begin{algorithmic}
\STATE \textbf{Input:} $G$, $R$, locality parameter $\delta \geq 0$
\STATE $\alpha := \phi(R)$
\STATE $[F,S] := \alg{3StageFlow}(G'_R(\alpha,\delta))$
\WHILE{$flow(F) < \alpha \vol(R)$}
\STATE $\alpha \leftarrow \phi(S)$; $\quad$ $S^* \leftarrow S$
\STATE $[F,S] := \alg{3StageFlow}(G'_R(\alpha,\delta))$
\ENDWHILE
\STATE \textbf{Return:} $S^*$
\end{algorithmic}
\end{algorithm}
This procedure finds the set $S^*$ that minimizes
\begin{equation} 
\label{eq:modrelcond}
\bar{\phi}_R(S) = \partial S/(\vol(R\cap S) - \eps \vol(\bar{R} \cap S)), 
\end{equation}
which is related to the relative quotient score~\eqref{eq:relquot}.
%Note that by simply choosing the locality parameter $\delta$ to be zero, our algorithm becomes an effective method for computing the \alg{Improve} objective.

\subsection{Cut Quality Guarantee}
The following result is an extension of the theorem from~\citet{andersen2008-improve}, updated to include the effects of our parameter $\delta.$ %This gives us a concrete guarantee on the conductance quality of $S$:
\begin{theorem}
Given an initial reference set $R \subset V$ with $\vol(R) \leq \vol(\bar{R})$, \alg{SimpleLocal} returns a cut set $S^*$ where
\begin{enumerate}
\item if $C\subseteq R$, then $\phi(S^*) \leq \phi(C).$
\item For all sets of nodes $C$ such that for some $\gamma > \delta$
\[\tfrac{\vol(R\cap C)}{\vol(C)} \geq \tfrac{\vol(R)}{\vol(V)} + \gamma \tfrac{\vol(\bar{R})}{\vol(V)},\]
we have
$\phi(S^*) \leq \frac{1}{(\gamma - \delta)} \phi(C).$
\end{enumerate}
\end{theorem}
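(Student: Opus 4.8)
The plan is to funnel both parts through a single ``master inequality'' that holds once \alg{SimpleLocal} has terminated, and then to finish with two short computations. First I would pin down what termination produces. Whenever the while-loop of Algorithm~\ref{alg:SL} executes, its guard $\flow(F) < \alpha\vol(R)$ combined with the lemma relating cuts of $G'_R(\alpha,\delta)$ to low-conductance sets forces $\phi(S) < \alpha$, so the reassignment $\alpha\leftarrow\phi(S)$ strictly decreases $\alpha$; since $G$ has only finitely many distinct conductance values, the loop must stop. (If the loop body is never entered, adopt the convention $S^*=R$ and $\alpha^*=\phi(R)$; everything below goes through verbatim.) At termination $\alpha^*:=\phi(S^*)$ equals the last value of $\alpha$, the guard fails, and by Theorem~\ref{thm:maxflow} the returned $F$ is a genuine maximum flow of $G'_R(\alpha^*,\delta)$, so $\flow(F)\ge\alpha^*\vol(R)$. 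But every $s$--$t$ flow has value at most the source capacity $\alpha^*\vol(R)$, so the maximum flow --- hence, by max-flow/min-cut, the minimum $s$--$t$ cut --- of $G'_R(\alpha^*,\delta)$ equals $\alpha^*\vol(R)$, the capacity of the trivial cut $S=\emptyset$. Substituting an arbitrary $C\subseteq V$ into the cut-capacity expression \eqref{eq:augmin} (with $\eps=f(R)+\delta$) and using that this capacity is at least $\alpha^*\vol(R)$ gives
\[ \partial C \;\ge\; \phi(S^*)\bigl(\vol(R\cap C)-\eps\,\vol(\bar R\cap C)\bigr)\qquad\text{for all }C\subseteq V. \tag{$\star$} \]

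Part~1 is then immediate: for $C\subseteq R$ we have $\vol(\bar R\cap C)=0$ and $\vol(R\cap C)=\vol(C)$, so $(\star)$ reads $\partial C\ge\phi(S^*)\vol(C)$; since $\min\{\vol(C),\vol(\bar C)\}\le\vol(C)$ this yields $\phi(S^*)\le\partial C/\vol(C)\le\partial C/\min\{\vol(C),\vol(\bar C)\}=\phi(C)$.

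For Part~2, write $a=\vol(R\cap C)$, $b=\vol(\bar R\cap C)$, $c=a+b=\vol(C)$, so the factor in $(\star)$ is $(1+\eps)a-\eps c$, which is increasing in $a$ because $1+\eps>0$. I would then plug in the hypothesis $a\ge\bigl(\tfrac{\vol(R)}{\vol(V)}+\gamma\tfrac{\vol(\bar R)}{\vol(V)}\bigr)c$ together with $\eps=\tfrac{\vol(R)}{\vol(\bar R)}+\delta$ and $\vol(V)=\vol(R)+\vol(\bar R)$, and check the inequality $(1+\eps)a-\eps c\ge(\gamma-\delta)c$; after clearing the positive denominator $\vol(V)\vol(\bar R)$ the claim collapses to the evident $\delta\,\vol(\bar R)\bigl(\vol(R)+\gamma\,\vol(\bar R)\bigr)\ge 0$. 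Feeding this back into $(\star)$ and using $\min\{\vol(C),\vol(\bar C)\}\le\vol(C)$ gives $\partial C\ge\phi(S^*)(\gamma-\delta)\min\{\vol(C),\vol(\bar C)\}$, i.e. $\phi(S^*)\le\frac{1}{\gamma-\delta}\phi(C)$; the assumption $\gamma>\delta$ is precisely what makes $\gamma-\delta>0$ (and the denominator of $\bar\phi_R(C)$ in \eqref{eq:modrelcond} positive), so the bound is meaningful.

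The main obstacle is the algebraic step in Part~2: turning the ``overlap'' hypothesis on $C$ into the clean bound $(1+\eps)a-\eps c\ge(\gamma-\delta)c$ with exactly the constant $\gamma-\delta$ requires expanding everything in terms of $\vol(R),\vol(\bar R),\gamma,\delta$ and keeping careful track of which leftover terms are nonnegative; the monotonicity-in-$a$ reduction also quietly uses $1+\eps>0$, and the whole setup relies on $\flow(F)$ being a true maximum flow (Theorem~\ref{thm:maxflow}), not merely an approximate one. A minor but worth-stating point is the degenerate case in which the while-loop body is never executed, which is covered by the convention $S^*=R$, $\alpha^*=\phi(R)$ introduced above.
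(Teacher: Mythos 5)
Your proof is correct, and its computational core coincides with the paper's: the decisive step in Part~2 --- showing $\vol(R\cap C)-\eps\,\vol(\bar R\cap C)\ge(\gamma-\delta)\vol(C)$ under the overlap hypothesis --- is exactly the inequality the paper verifies (your cleared-denominator residual $\delta\,\vol(\bar R)(\vol(R)+\gamma\,\vol(\bar R))\ge 0$ checks out), and Part~1 is the same specialization. Where you differ is in how you certify the optimality of $S^*$. The paper asserts that \alg{SimpleLocal} returns the minimizer of $\bar\phi_R$ and then chains $\phi(S^*)\le\bar\phi_R(S^*)\le\bar\phi_R(C)$, which quietly leans on two unproven facts: that the loop really does find the global minimizer of \eqref{eq:modrelcond}, and that the denominator of $\bar\phi_R(S^*)$ is at most $\min\{\vol(S^*),\vol(\bar{S^*})\}$ (the bound buried in the proof of Lemma~1). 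You instead derive the master inequality $(\star)$ directly from the termination condition: once $\flow(F)\ge\alpha^*\vol(R)$ with $\alpha^*=\phi(S^*)$, weak flow/cut duality forces every cut capacity \eqref{eq:augmin} to be at least $\alpha^*\vol(R)$, which is $(\star)$. This is more self-contained, it sidesteps the denominator comparison entirely (since $(\star)$ is stated multiplicatively it is harmless even when $\vol(R\cap C)-\eps\,\vol(\bar R\cap C)\le 0$), and it makes explicit the termination argument and the degenerate never-enter-the-loop case, both of which the paper leaves implicit. Note only that for $(\star)$ you need just weak duality with any feasible flow of value $\ge\alpha^*\vol(R)$; the full strength of Theorem~\ref{thm:maxflow} is needed only in the termination argument, where Lemma~1 is applied to the genuine minimum cut.
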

(We include a full proof in the supplementary material.)
%Both of the above assertions can be shown using the same proof techniques employed by~\citet{andersen2008-improve}, with slight alterations for including the parameter $\delta$. We include a full proof in the supplementary material.

\section{Experiments}
In this section we present experimental results for \alg{SimpleLocal} on two graphs. We begin with an example on a small collaboration network to illustrate the effect of the locality parameter $\delta$. We then turn our attention to graphs from MRI scans to demonstrate \alg{SimpleLocal}'s ability to solve problems on extremely large graphs. Our implementation of \alg{SimpleLocal} and \alg{3StageFlow} are in Matlab, using Gurobi to solve the max-flow problems.

\subsection{Netscience Example}
Newman's netscience graph is a collaboration network with 379 nodes and a total volume of 1828. %The reference set we select for running experiments on this graph is a collection of six nodes made up of a specified vertex and all of its immediate neighbors. %This set is both small enough to guarantee a good locality bound and also large enough that we expect it to overlap with several sets of better conductance.
The reference set we use is a node and its immediate neighbors. We run \alg{SimpleLocal} for decreasing values of $\delta$ from 1 to 0 (and implicitly, decreasing amounts of regularization) to obtain cuts near $R$ of increasing size. These sets have increasingly better conductance. Note that for $\delta = 0$ we are computing the \alg{Improve} objective. We illustrate our results in Figure~\ref{fig:Netscience}.
\begin{figure}[tb]
\centering
\subfigure[$\delta = 1$, $\phi = .47$ \newline volume explored 94 \newline (bound gave 211)]{%
\includegraphics[width=1.2in]{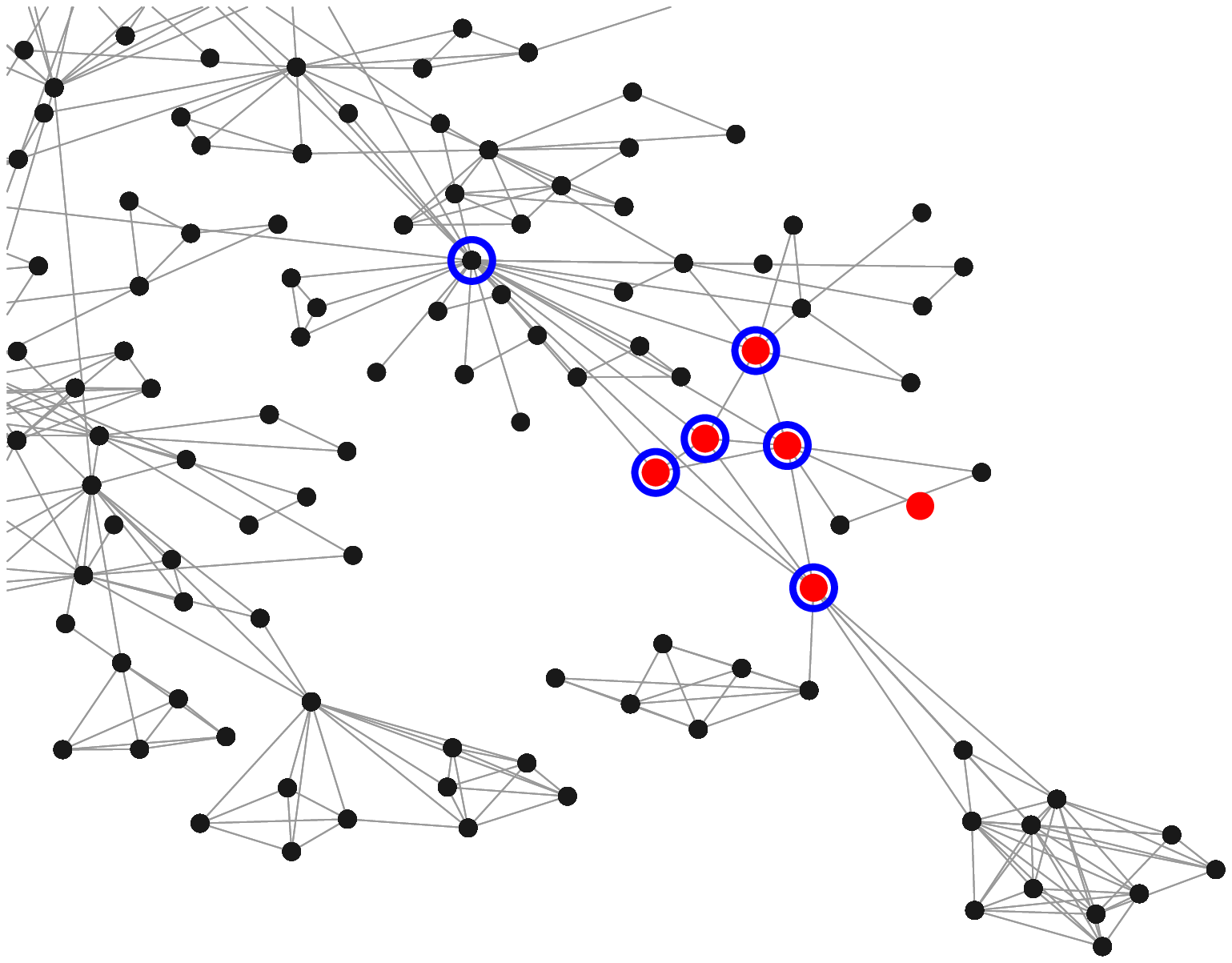}
\label{fig:subfigure1}} $\qquad$
\subfigure[$\delta = .6$, $\phi = .23$ \newline volume explored 116 \newline (bound gave 284)]{%
\includegraphics[width=1.2in]{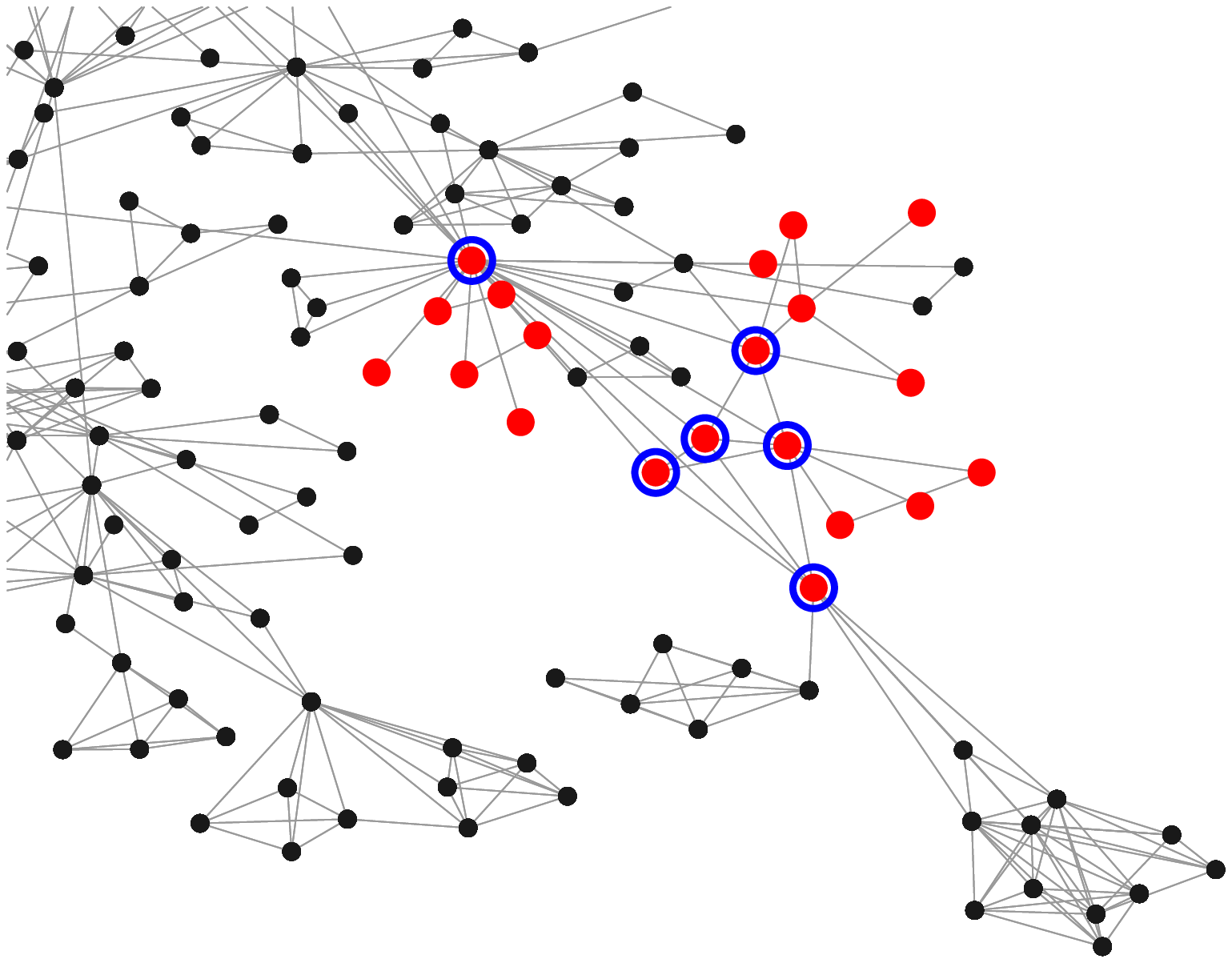}
\label{fig:subfigure3}}
\subfigure[$\delta = .3$, $\phi = .09$ \newline volume explored 160  \newline (bound gave 455)]{%
\includegraphics[width=1.2in]{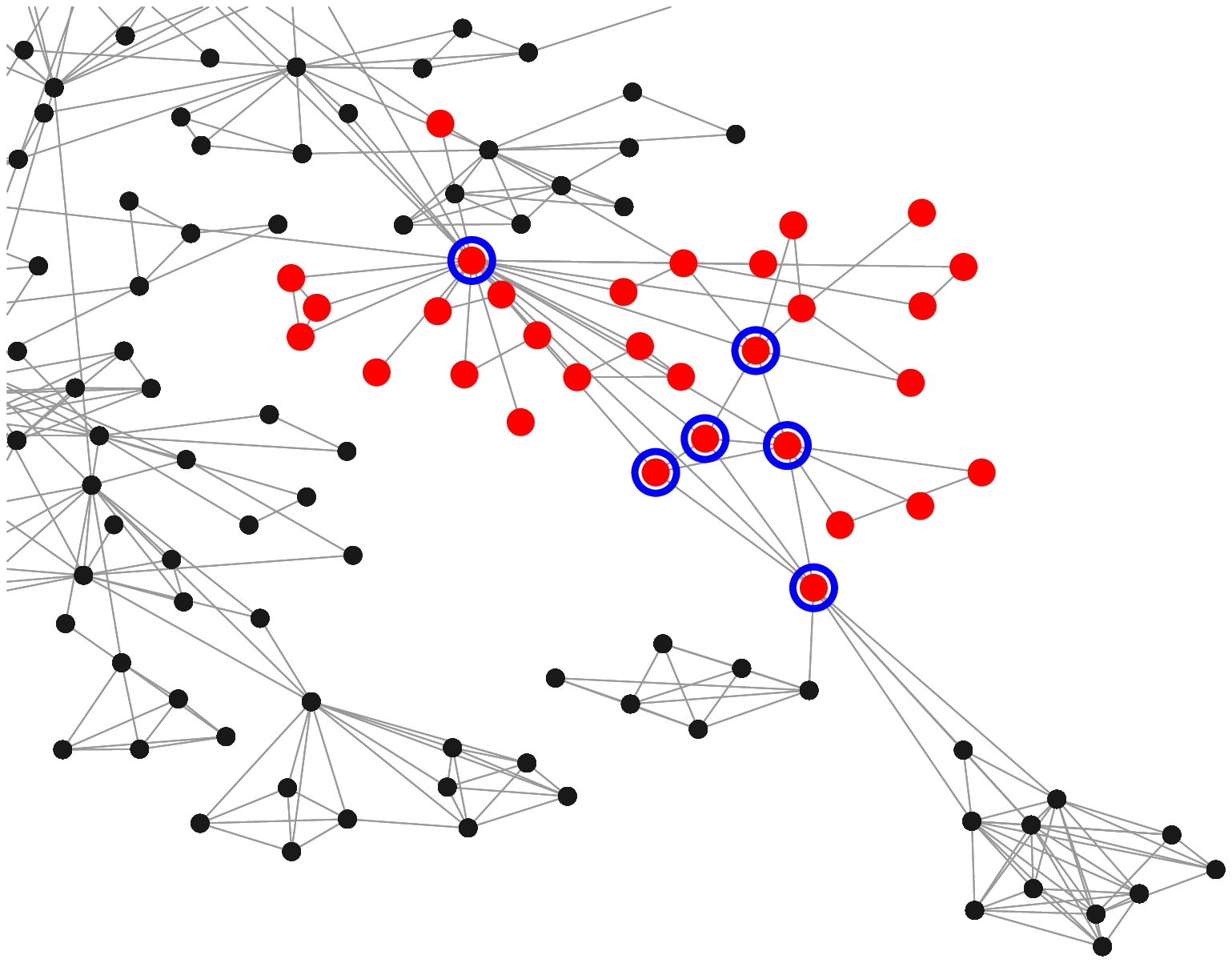}
\label{fig:subfigure5}}
 $\qquad$
\subfigure[$\delta = 0$, $\phi = .03$ \newline volume explored 522 \newline (bound gave $\infty$)]{%
\includegraphics[width=1.2in]{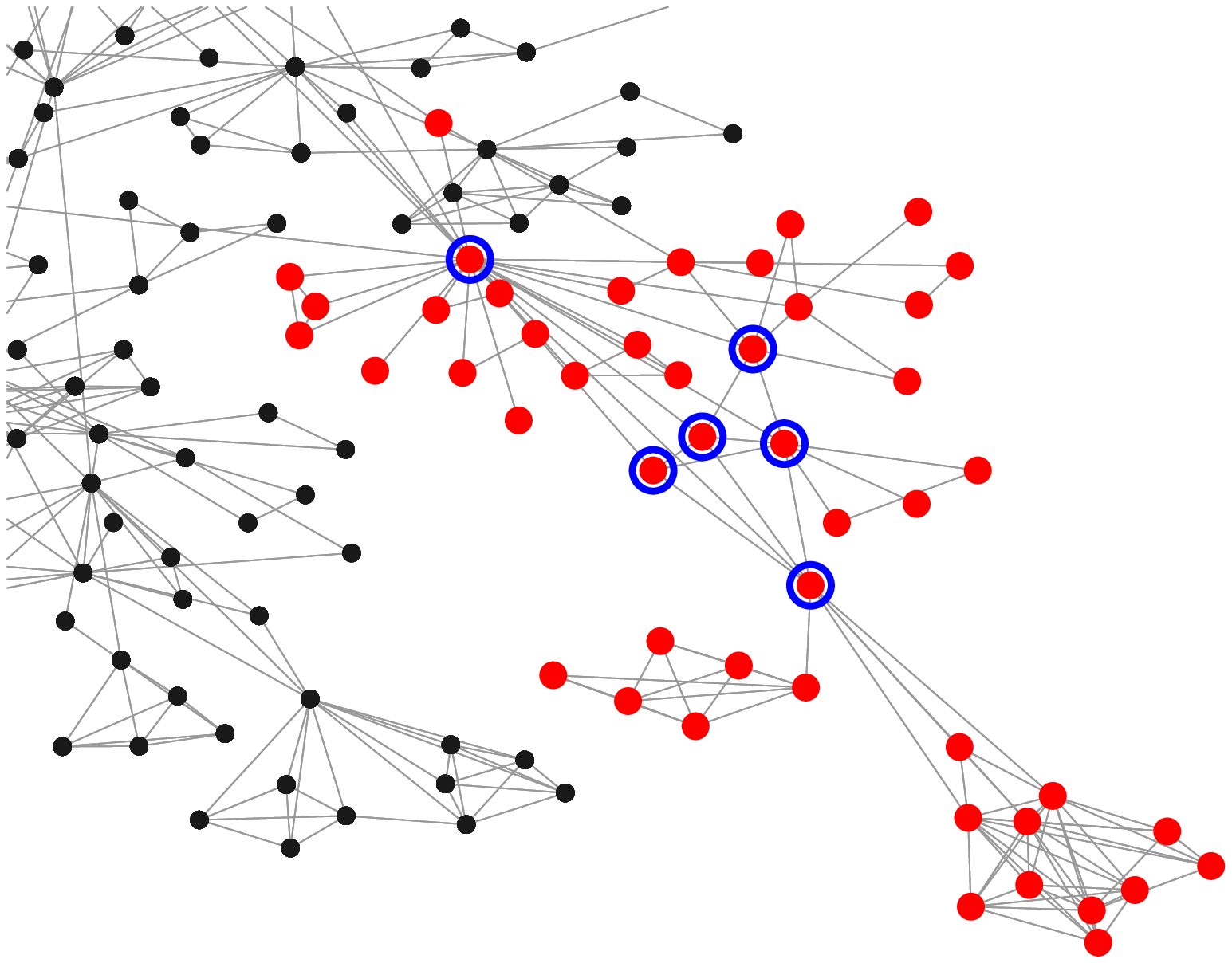}
\label{fig:subfigure6}}
\caption{\alg{SimpleLocal} results on a small graph for different $\delta$ values where $\delta$ controls the sparsity regularization and size of the output. The reference set is given by the nodes circled in blue, and the returned set $S^*$ is shown in red.}
\label{fig:Netscience}
%\vspace*{-\baselineskip}
\end{figure}

% \begin{table}[b]
% \caption{Solution set statistics found by \alg{SimpleLocal} on a small graph illustrated in Figure~\ref{fig:Netscience} and the volume bound and actual volume explored to find each. Our algorithm finds increasingly better cuts for lower values of $\delta$.}
% \vskip .15in
% \label{table:Net}
% \fontsize{8}{9}\selectfont
% \centering
%     \begin{tabularx}{\linewidth}{XXXXX} 
%     \toprule
%     $\delta$ & volume bound & volume explored & nodes($S$) & $\phi(S)$\\
% 		\midrule
%     1 & 211 & 94 & 6 & .4706\\
%  %  .8 & 239 & 108 & 16 & .3158 \\
%     .6 & 284 & 116 & 20 & .2326\\
% %    .5 &  320 & 130 & 26 & .1346\\
%    .3 & 455 & 160 & 31 & .0924\\
%    %.05 & 1541 & 444 & 42 & .0345\\
%    0 & vol(G) & 522 &  48 & .0261 \\
%    \bottomrule
%     \end{tabularx}
% \end{table}

%\vspace{-4\baselineskip}

%The actual volume of the subgraph touched by our algorithm is significantly less than the theoretical bound we have proven. %
%This allows us to find even set returned by the \alg{Improve} algorithm (when $\delta = 0$) while still exploring less than a third of the graph. 
%Our strict bound on the subgraph explored suggests a simple strategy for balancing locality and quality of cuts in our experiments. Give any graph $G$ and reference set $R$, we can decide in advance how much of the graph we are willing to explore, and from this calculate the smallest $\delta$ to use without going over our desired bound.
\subsection{MRI Scans}

\begin{figure}[tb]
\centering
\subfigure[ \fontsize{8}{9}\selectfont The target ventricle]{%
\includegraphics[width=0.35\linewidth]{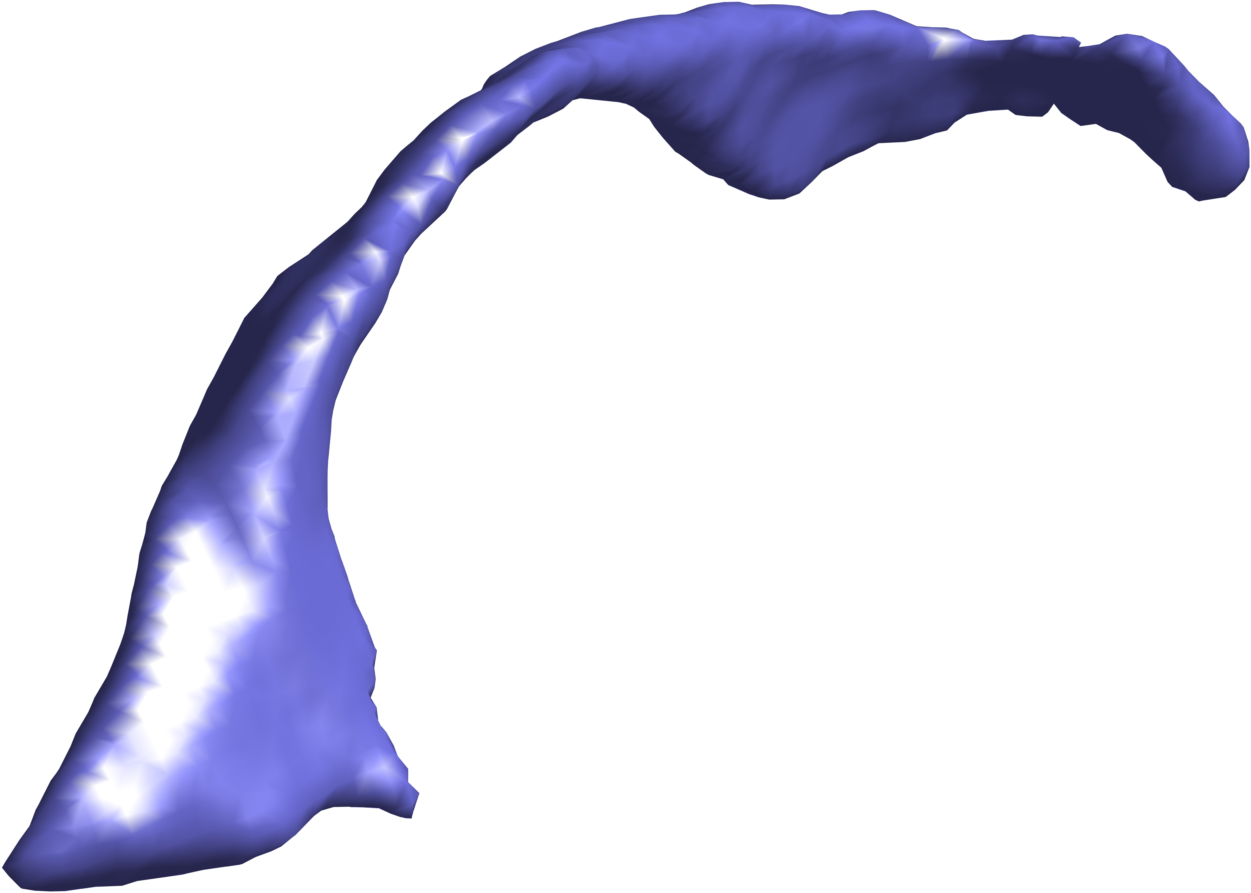}
}
\subfigure[\fontsize{8}{9}\selectfont {\alg{SimpleLocal} result}]{%
\includegraphics[width=0.35\linewidth]{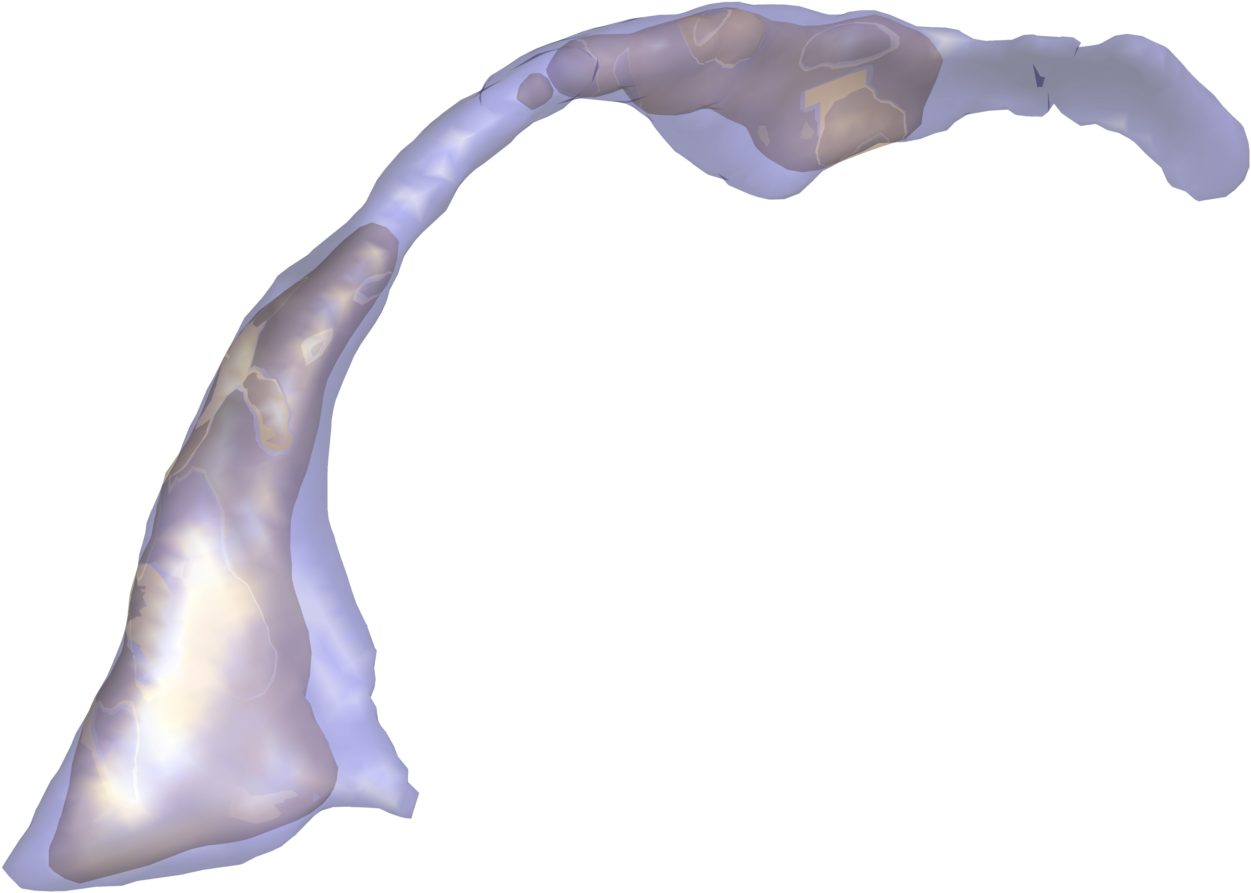}
}
\subfigure[\fontsize{8}{9}\selectfont Refined \alg{SimpleLocal}]{%
\includegraphics[width=0.35\linewidth]{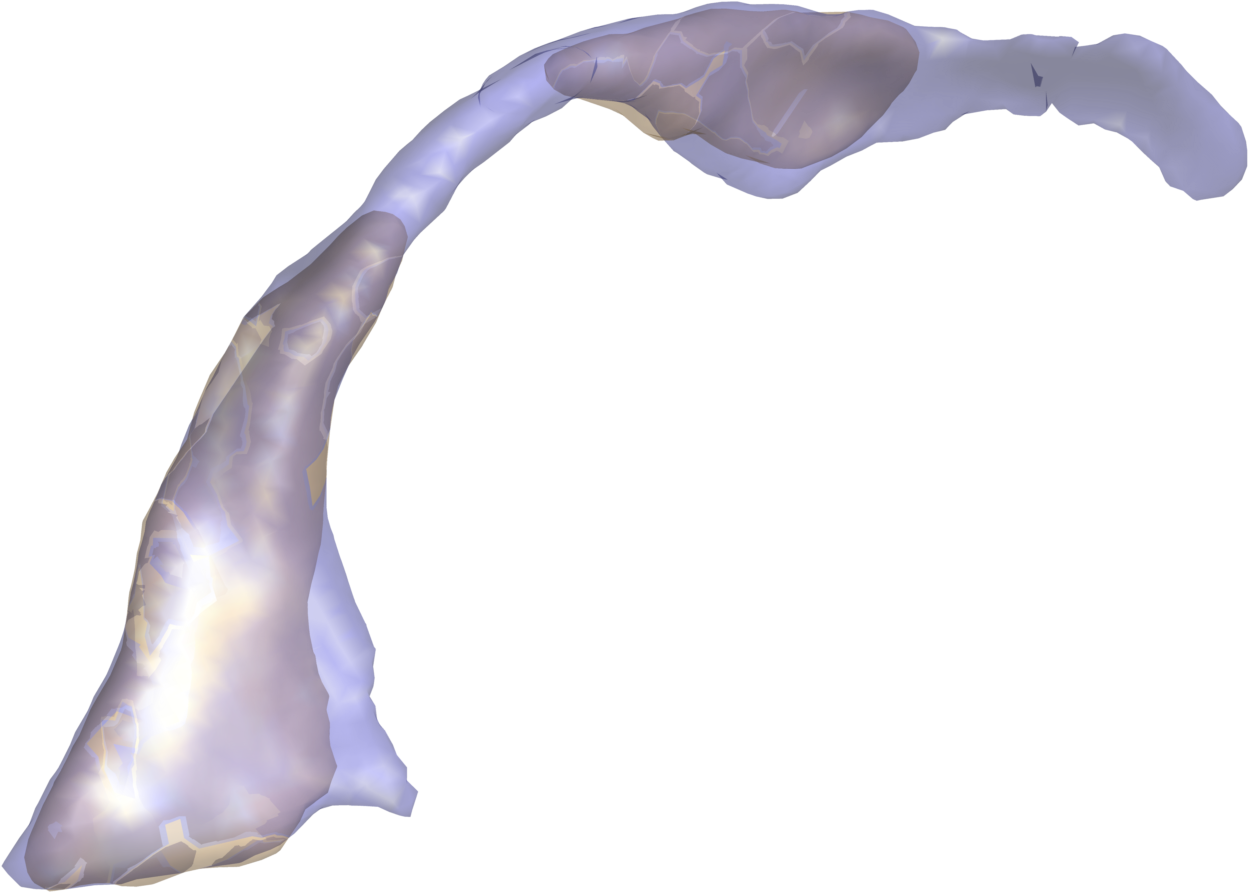}
}
\subfigure[\fontsize{8}{9}\selectfont  A spectral result]{%
\includegraphics[width=0.35\linewidth]{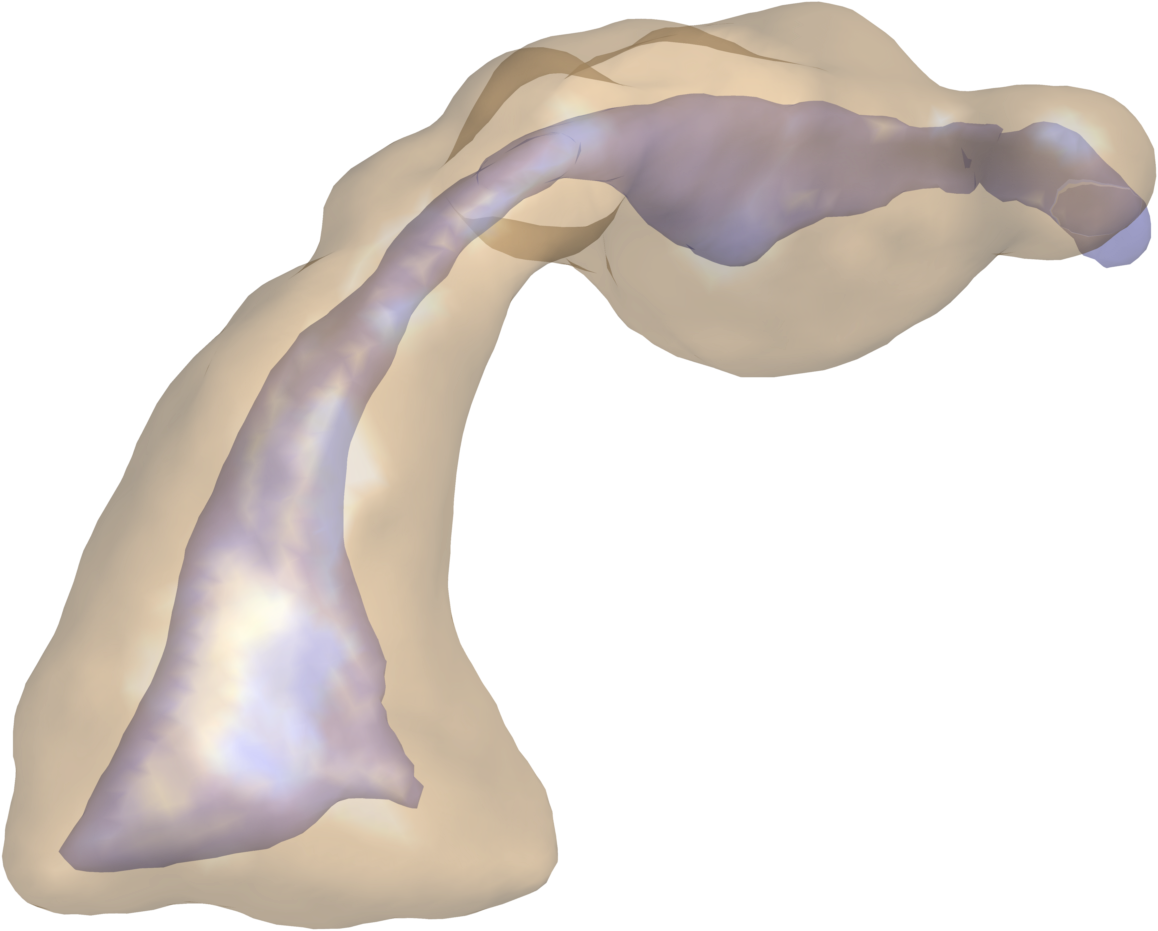}
}
\caption{Results from segmenting the left lateral ventricle in an 18 million voxel MRI scan on a graph with 467 million edges. The true set is always show in blue. Our new flow algorithms track the boundary closely but cannot find the bridge in the ventricle, whereas a spectral method returns a substantially larger region.}
\label{fig:mri}
%\vspace*{-2\baselineskip}
\end{figure}

To demonstrate the scalability of our algorithm, we consider identifying a region in a 3d MRI scan. We obtained a labeled MRI scan from the MICCAI-2012 challenge with $256 \times 287 \times 256$ ($\approx$ 18 million) voxels~\cite{Marcus-2007-oasis}.  We formed a weighted graph based on adjacent voxel similarity (see supplement for details). The final graph contained around 467 million edges and 18 million voxels.

\begin{table}[tb]
\vspace{-\baselineskip}
\caption{Statistics on the true left lateral ventricle and the sets returned by the three methods \alg{SimpleLocal}, a refinement step, and spectral. The refined set gave the best conductance and highest accuracy overall.}
\label{tab:stats}
\fontsize{8}{9} \selectfont
\vskip .15in
 \begin{tabularx}{\linewidth}{@{}lXXp{1cm}lXX@{}}
 \toprule
 method & size & $\phi$ & volume explored & precision & recall & time (sec.) \\
 \midrule 
  True & 3965 & 0.129 & -- & -- & --  & -- \\ 
  \addlinespace
\alg{Sim.Loc.} & 2425 & 0.089 & 2463247 & 0.96 & 0.59 & 278.4 \\ 
+Refined & 2737 & 0.067 & 1845966 & 0.97 & 0.67 & +97.5 \\  \addlinespace
Spectral & 27918 & 0.094 & 5280988 & 0.14 & 0.99 & 9.6 \\ 
\bottomrule
 \end{tabularx}
\vspace*{-\baselineskip} 
\end{table}

The left lateral ventricle is a cavity in the interior of the brain shown in Figure~\ref{fig:mri}a. We use our \alg{SimpleLocal} method, a one-step \alg{3StageFlow} refinement procedure (typical of what might be done in practice), and a spectral method to identify this region from 75 randomly chosen seed voxels (Figures~\ref{fig:mri}b-d). (See the supplement for the details of the computations and parameter choices.) We present the statistics of the four sets in Table~\ref{tab:stats}. Overall, the flow method accurately tracks the true boundary of the region, although it is unable to complete an internal bridge within the region. The refinement step fills in the region slightly more. In comparison, the spectral method returns a much larger set that contains the entire ventricle, but completely misses the boundary. This mirrors the intuition from the introduction and results on this same spectral method in community detection, where it often finds \emph{larger, but imprecise} communities~\cite{Kloster-2014-hkrelax}. 

Note that the bridge of the ventricle is unlikely to be found by our method in this case. This happens because either flow-based set identified has a conductance value that is smaller (0.089 and 0.067) than the conductance of the entire region (0.129). Attempting to improve the conductance value will only shrink the identified region further (see the supplement for a few of these smaller, better conductance sets).  Another curious aspect of \alg{SimpleLocal}'s result set is that it is disconnected. The larger of the two regions actually has a smaller conductance value itself, but the method finds a disconnected set  because of the disconnected seeds. 
In terms of the runtime, the spectral method is faster than our sequence of max-flow problems. We discuss engineering details that could improve runtime in the supplement.

\section{Conclusions and Discussion}

We have given a new, simple, strongly-local algorithm for a commonly occurring problem that arises in semi-supervised learning, community detection on graphs, and image segmentation. This algorithm begins with a reference set that reflects a region of the graph known to be important and seeks a better conductance set nearby. Our method is heavily influenced by both the \alg{Improve} and \alg{LocalImprove} methods. In comparison with \alg{Improve}, our method is strongly-local and practically scalable (given a max-flow solver for the local graphs). In comparision with \alg{LocalImprove}, we have a significantly worse theoretical runtime because we solve a sequence of \emph{maximum flow} problems compared with their use of \emph{blocking flows}. However, our algorithm is simple to implement and can take advantage of many well-engineered maximum flow codes, such as Boykov and Kolmogorov's method that enables efficient modified flows~\cite{Boykov-2004-maxflow}. We also identified the implicit source of locality in the \alg{LocalImprove} method (Theorem~\ref{thm:l1}), which may enable even faster methods in the future. 

The new \alg{SimpleLocal} implementation enabled us to run experiments on a massive MRI scan with 467 million edges that would not have been possible or desirable in a weakly-local sense using traditional graph algorithms, because the output should be a set of roughly 4000 vertices out of 18 million.  Our work thus opens new possibilities in the use of maximum flows for machine learning. In particular, using a combination of spectral and flow methods will likely lead to improved results on many problems due to their complementary properties. Spectral methods can help quickly identify expanded, crude regions that the flow-based methods could contract to sharpen the boundaries. 
%For instance, \citet{Pang-2004-sentiments} describes a closely related construction used to solve sentiment analysis problems.

In future work, we plan to extend our contribution to approximate maximum-flow solutions. This would enable us to take advantage of recent innovations that produce approximate maximum-flows in nearly-linear time~\cite{Christiano-2011-max-flow,Lee-2013-max-flow,Sherman-2013-max-flow}---which would likely lead to a better theoretical runtime as well. Also, we wish to better understand the tradeoffs between spectral and flow methods using this new strongly-local computational primitive.

\section{Acknowledgments}
%We'd like to thank several of our colleagues whose names have been anonymized for providing feedback on this manuscript as well as governmental funding agencies for supporting our work.

We'd like to acknowledge and thank several funding agencies for supporting our work. Gleich was supported by NSF awards IIS-1546488, Center for Science of Information STC, CCF-093937, CAREER CCF-1149756, and DARPA SIMPLEX. Veldt was supported by NSF award IIS-1546488. Mahoney would like to acknowledge the Army Research Office, the Defense Advanced Research Projects Agency, and the Department of Energy for providing partial support for this work.

\footnotesize
\bibliography{refs}
\bibliographystyle{icml2016}

\newpage
\appendix
\section{Extra results and proofs}

In this section, we include a few of the results we use that also appeared in other material -- or had very similar proofs in other material -- but restated in our notation for the reader's convenience.

\begin{lemma}
 If the minimum $s$-$t$ cut of $G'_R(\alpha,\delta)$ for $\delta \geq 0$ is less than $\alpha \vol(R),$ then $\phi(S) < \alpha$, where $S$ is the node set corresponding to the cut.

\end{lemma}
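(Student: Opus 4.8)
The plan is to translate the hypothesis about the minimum-cut value into an inequality involving $\partial S$ and the volumes $\vol(R\cap S)$ and $\vol(\bar R\cap S)$, and then to show that the relevant combination of volumes is bounded by $\min\{\vol(S),\vol(\bar S)\}$. Concretely, write $\eps = f(R)+\delta$ and associate the minimum cut with its source-side node set $S\subseteq V$. Exactly as in \eqref{eq:objfn}, but with the sink-side capacities scaled by $\eps$ instead of $f(R)$, the capacity of this cut equals $\alpha\vol(R) + \partial S - \alpha\vol(R\cap S) + \alpha\eps\vol(\bar R\cap S)$. The hypothesis that this quantity is $<\alpha\vol(R)$ therefore rearranges to $\partial S < \alpha\bigl(\vol(R\cap S) - \eps\vol(\bar R\cap S)\bigr)$. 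Since $\partial S\ge 0$ and $\alpha>0$, this already forces $\vol(R\cap S) - \eps\vol(\bar R\cap S) > 0$, hence $\vol(R\cap S)>0$ and in particular $S$ is a nonempty proper subset with $\vol(S)>0$.

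The core of the argument is then the claim that
\[ \vol(R\cap S) - \eps\vol(\bar R\cap S) \;\le\; \min\{\vol(S),\vol(\bar S)\}. \]
The bound by $\vol(S)$ is immediate, since $R\cap S\subseteq S$ and $\eps\ge 0$. The bound by $\vol(\bar S)$ is the substantive step: using $\vol(\bar S)=\vol(V)-\vol(S)$ and $\vol(S)=\vol(R\cap S)+\vol(\bar R\cap S)$, it is equivalent to $2\vol(R\cap S) + (1-\eps)\vol(\bar R\cap S) \le \vol(R)+\vol(\bar R)$. I would dispatch this with a short case split: if $\eps\ge 1$ the left-hand side is at most $2\vol(R\cap S)\le 2\vol(R)\le\vol(R)+\vol(\bar R)$ by the standing assumption $\vol(R)\le\vol(\bar R)$; if $\eps<1$ it is at most $2\vol(R)+(1-\eps)\vol(\bar R)$, which is $\le\vol(R)+\vol(\bar R)$ precisely when $\vol(R)\le\eps\vol(\bar R)$, and $\eps\vol(\bar R) = f(R)\vol(\bar R)+\delta\vol(\bar R) = \vol(R)+\delta\vol(\bar R)\ge\vol(R)$ since $\delta\ge 0$. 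The same inequality shows $\vol(\bar S)\ge\vol(R\cap S)-\eps\vol(\bar R\cap S)>0$, so $\vol(\bar S)>0$ as well.

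Finishing is then immediate: combining the displayed claim with the rearranged hypothesis gives $\partial S < \alpha\min\{\vol(S),\vol(\bar S)\}$, and since both $\vol(S)$ and $\vol(\bar S)$ are positive we may divide to obtain $\phi(S) = \partial S/\min\{\vol(S),\vol(\bar S)\} < \alpha$. I expect the $\vol(\bar S)$ bound in the core claim to be the only real obstacle — the rest is bookkeeping on the cut-capacity formula — and this is exactly the point where the hypotheses $\vol(R)\le\vol(\bar R)$ and $\delta\ge 0$ enter. Setting $\delta=0$ (so $\eps=f(R)$) recovers the original Andersen--Lang / \alg{Improve} statement, so the new content is merely checking that the extra penalty $\alpha\delta\vol(\bar R\cap S)$ only helps.
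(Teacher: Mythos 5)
Your proposal is correct and follows essentially the same route as the paper's proof: rewrite the cut capacity, reduce everything to the key inequality $\vol(R\cap S)-\eps\vol(\bar R\cap S)\le\min\{\vol(S),\vol(\bar S)\}$, with the $\vol(S)$ bound trivial and the $\vol(\bar S)$ bound carrying the content. The only difference is cosmetic: the paper verifies the $\vol(\bar S)$ bound by first replacing $\eps$ with $f(R)$ and expanding via complements using $f(R)\vol(\bar R)=\vol(R)$ and $f(R)\le 1$, whereas you rearrange to an equivalent inequality and split on $\eps\ge 1$ versus $\eps<1$; both rest on exactly the same facts ($\delta\ge 0$, $f(R)\vol(\bar R)=\vol(R)$, $\vol(R)\le\vol(\bar R)$).
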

\begin{proof}
Recall that the min-cut objective can be stated as 
\begin{equation*}
\min_{S \subset V} \,\, \alpha \vol(R)+  \partial S - \alpha \vol(R\cap S) + (\alpha f(R) + \alpha \delta) \vol(\bar{R} \cap S).
\end{equation*}
If the objective is less than $\alpha \vol(R),$ then
\begin{align*}
& \partial S - \alpha \vol(R\cap S) + (\alpha f(R) + \alpha \delta) \vol(\bar{R} \cap S) < 0\\
& \implies \frac{\partial S}{\vol(R\cap S) - \eps \vol(\bar{R} \cap S)} < \alpha,
\end{align*}
where $\eps = f(R) + \delta.$
All we need to show then is that 
\[\vol(R\cap S) - \eps \vol(\bar{R} \cap S) \leq \min\{\vol(S), \vol(\bar{S})\}\]
and it will follow that
\[\phi(S) =  \frac{\partial S}{\min\{\vol(S), \vol(\bar{S})\}} < \alpha.\]

We first note that
\[\vol(R\cap S) - \eps \vol(\bar{R} \cap S) \leq \vol(R\cap S) \leq \vol(S).\]
Also,
\begin{align*}
&\vol(R\cap S) - \eps \vol(\bar{R} \cap S) \\
&\leq \vol(R\cap S) - f(R) \vol(\bar{R} \cap S)\\
&= \vol(R) - \vol(R \cap \bar{S}) - f(R)\vol(\bar{R}) + f(R)\vol(\bar{R}\cap \bar{S})\\
&\leq \vol(R) - f(R) \vol(\bar{R}) + f(R) \vol(\bar{R}\cap \bar{S})\\
&= f(R)\vol(\bar{R}\cap \bar{S})\\
&\leq \vol(\bar{S})
\end{align*}
so the result holds.
\end{proof}

Both assertions in the following theorem are novel results regarding our algorithm \alg{SimpleLocal}. They can be shown using the same proof techniques used in Lemma 2.2 of~\citet{andersen2008-improve}, with slight alterations to include the locality parameter $\delta$. 

%\begin{theorem}
\textbf{Theorem 4}
Given an initial reference set $R \subset V$ with $\vol(R) \leq \vol(\bar{R})$, \alg{SimpleLocal} returns a cut set $S^*$ such that 
\begin{enumerate}
\item if $C\subseteq R$, then $\phi(S^*) \leq \phi(C).$
\item For all sets of nodes $C$ such that
\[\frac{\vol(R\cap C)}{\vol(C)} \geq \frac{\vol(R)}{\vol(V)} + \gamma \frac{\vol(\bar{R})}{\vol(V)}\]
for some $\gamma > \delta$, we have
$\phi(S^*) \leq \frac{1}{(\gamma - \delta)} \phi(C).$
\end{enumerate}
%\end{theorem}

\begin{proof}
We use the same proof outline as~\citet{andersen2008-improve}, and reproduce many of the same steps for the convenience of the reader.

The first assertion holds because if $C \subseteq R$, $\bar{\phi}_R(C) = \phi(C),$ so
\[\phi(S^*) \leq \bar{\phi}_R(S^*) \leq \bar{\phi}_R(C) = \phi(C),\]

where $\bar{\phi}_R$ is used to denote quotient score introduced in equation (7) of the paper. We refer to this as the \emph{modified} quotient score relative to $R$:

\[ \bar{\phi}_R(C) = \frac{\partial C}{\vol(R\cap C) - \eps \vol(\bar{R} \cap C)}.\]
To prove the second assertion we start by showing that $\bar{\phi}_R(C) \leq \frac{1}{(\gamma - \delta)} \phi(C)$, which is true if and only if \[\vol(C\cap R) - \eps \vol(C\cap \bar{R}) \geq (\gamma - \delta) \vol(C).\] To see this holds we apply the assumption made in the second assertion and simplify:
\begin{align*}
&\frac{\vol(R\cap C) - \eps \vol(C\cap \bar{R})}{\vol(C)} = \frac{\vol(C\cap R)}{\vol(C)} - \eps \frac{\vol(C\cap \bar{R})}{\vol(C)}\\
&\geq  \frac{\vol(R)}{\vol(V)} + \gamma \frac{\vol(\bar{R})}{\vol(V)} - (f(R)+\delta) \left(1 - \frac{\vol(R)}{\vol(V)} - \gamma \frac{\vol(\bar{R})}{\vol(V)}  \right)\\
&=\gamma \frac{\vol(\bar{R})}{\vol(V)}\left(1 + f(R)  \right) +  \frac{\vol(R)}{\vol(V)}\left( 1 - \frac{\vol(V)}{\vol(\bar{R})} + \frac{\vol(R)}{\vol(\bar{R})}\right) \\
&\hspace{1cm}- \delta   \left(1 - \frac{\vol(R)}{\vol(V)} - \gamma \frac{\vol(\bar{R})}{\vol(V)}  \right)\\
&= \gamma\cdot 1+ 0 - \delta \left(1 - \frac{\vol(R)}{\vol(V)} - \gamma \frac{\vol(\bar{R})}{\vol(V)}  \right)\\
&\geq \gamma - \delta.
\end{align*}
Since $S^*$ is the set that minimizes $\bar{\phi}_R(S)$, we have
\[\phi(S^*) \leq \bar{\phi}_R(S^*) \leq  \bar{\phi}_R(C) \leq \frac{1}{(\gamma - \delta)} \phi(C).\]
\end{proof}

\section{Empirical Runtime of SimpleLocal}
In terms of the runtime, the spectral method is substantially faster in practice than our sequence of max-flow problems. (See Table~\ref{tab:stats} in the main text.)
This arises due to a few factors. First, we are using a carefully engineered code for the spectral algorithm designed for speed. Second, we are using a general-purpose linear programming solver for the maximum-flow problems. Third, we are not exploiting any possible ``warm-start'' between independent flow solutions. We anticipate that a more careful implementation within our highly flexible three-stage framework would shrink the runtime gap considerably.

\section{Experiment parameters for the MRI problem}

We obtained a labeled MRI scan from the MICCAI-2012 challenge with $256 \times 287 \times 256$ voxels (around 18 million). (The MRI scans originated with the OASIS project, and labeled data was provided by Neuromorphometrics, Inc.~{\fontsize{8}{9}\selectfont \url{neuromorphometrics.com}} under an academic subscription.)  
We assembled a nearest neighbor graph on this image using all $26$ spatially adjacency voxels where each edge was weighted similar to~\citet{Shi2000-normalized-cuts}. We used the function $e^{-(\sqrt{I_i} - \sqrt{I_j})^2/0.05^2}$ where $\sqrt{I_i}$ is the scan intensity at voxel $i$. Subsequently, we threshholded the graph at a minimum weight of $0.1$ and scaled each edge weight to have minimum weight $1$ so that the volume of a set was an upper-bound on the number of edges contained.
The final graph was connected except for 35 voxels and contained 467 million edges. 

\paragraph{Seeding and SimpleLocal}
We picked $75$ random voxels in the true image, then used \alg{SimpleLocal} to refine the set $R$ consisting of these $75$ voxels and their immediate neighbors using a value of $\delta = 0.1$ to keep the computation local. The seed set is shown here in the supplement in Figure~\ref{fig:seed}. The resulting set is show in~\ref{fig:mri}(b).  

\paragraph{Refinement}
The output from \alg{SimpleLocal} can be further improved by growing the set by its neighborhood and varying $\delta$. We call this ``refinement" and used one step of refinement with $\delta = 0.5$.
The result is in Figure~\ref{fig:mri}(c).  

\paragraph{Spectral} We compare this against a highly-optimized strongly-local spectral method to minimize conductance using personalized PageRank vectors~\cite{andersen2006-local}, where the PageRank computation uses $\alpha = 0.99$. The spectral result is in the final subfigure Figure~\ref{fig:mri}(d). 

\paragraph{Parameter selection}
We picked parameters for the flow methods to ensure that the volume explored would be around 10 times the volume of the desired ventricle, and occasionally reduced the parameter $\delta$ if it seemed that the method was exploring too much or if the flow problems took too long. We picked the parameters for the spectral method until we found a set that meaningfully grew. Our particular technique attempts to avoid diffusing as much as possible and so we had to adjust the parameters to ensure that it moved beyond the seed set.

\subsection{Near optimality of Refined SimpleLocal}
We can use our \alg{SimpleLocal} and \alg{3StageFlow} primitives to attempt to identify the \emph{best} and \emph{largest} conductance set largely contained within the target ventricle. This is essentially the best result we could hope to achieve as the entire desired set has conductance larger than the set we identify. Thus, if we run a single iteration of \alg{3StageFlow} using the entire target set as $R$, $\alpha = 0.1291$ (the conductance of the target set), and $\delta = 15$, we will find a set that is almost exclusively contained within the target ventricle (Figure~\ref{fig:large-inside}). This choice of $\delta$ is guided by the intuition that we want the set to be \emph{almost exclusively} in the interior of the target, but small variations outside would be okay. The resulting solution set found has conductance $0.0621$ and 2527 vertices. The difference between the refined set we generated (Figure~\ref{fig:mri}(c) in the main text) and this set is slight. Their intersection is 2317 voxels. So there is a slightly better set that \alg{SimpleLocal} and the refinement procedure could have generated, but not by much.

\begin{figure}[tb]
 \includegraphics[width=\linewidth]{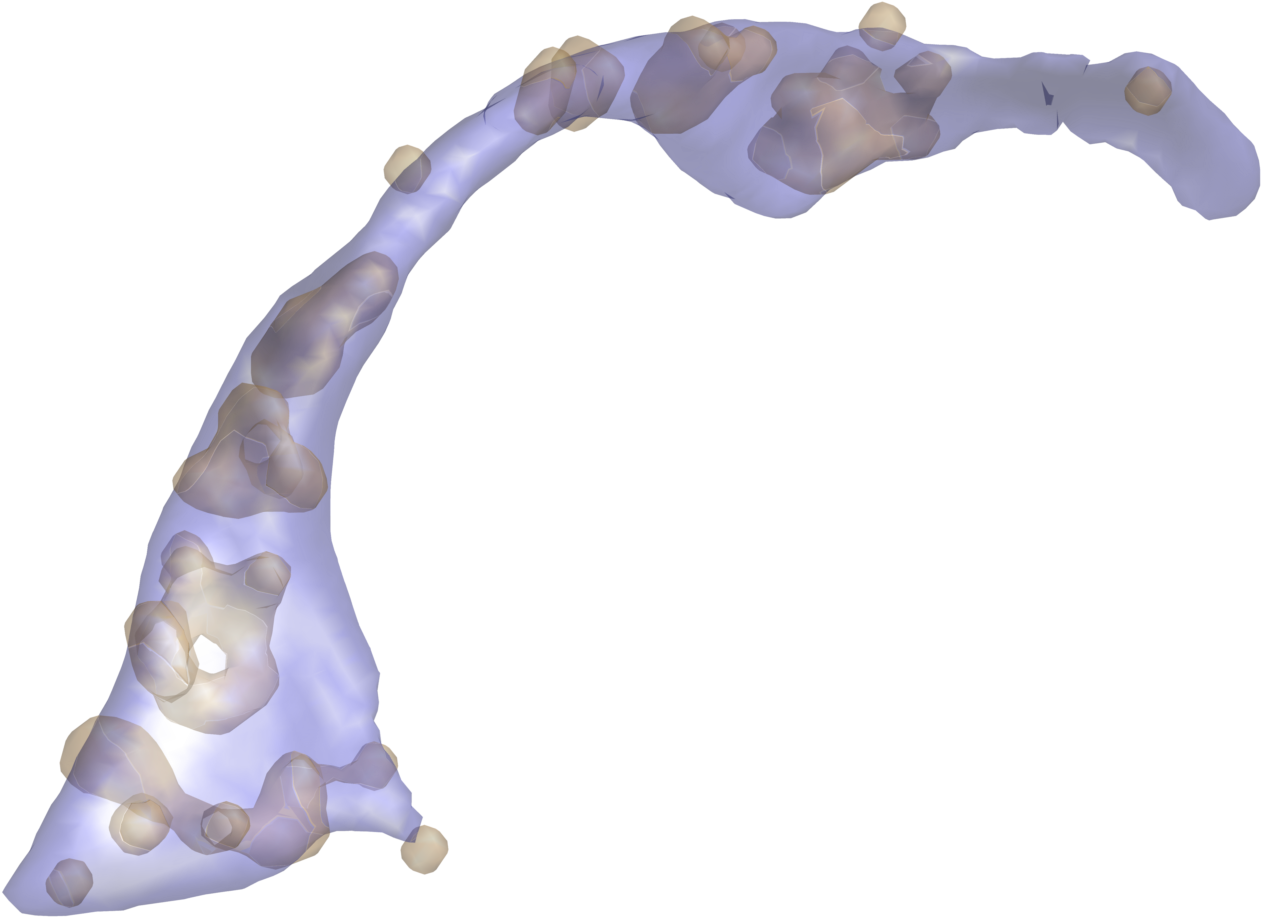}
 \caption{The seed set $R$ for the MRI segmentation. As in the main body figures, the true ventricle is shown in blue and the set in orange.}
 \label{fig:seed}
\end{figure}

\begin{figure}[tb]
 \includegraphics[width=\linewidth]{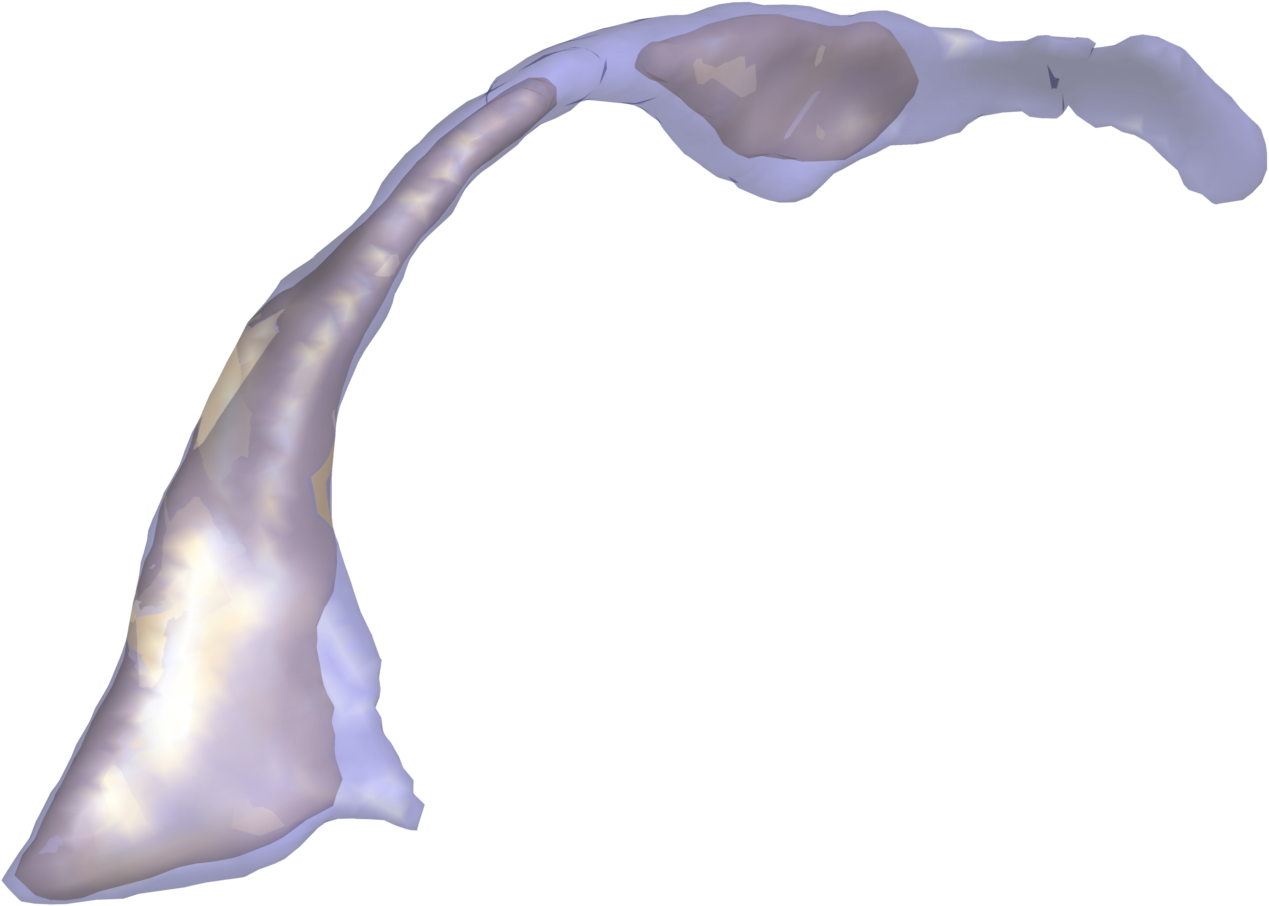}
 \caption{The largest set we identified with a small value of conductance inside the target ventricle. This is essentially the best set we could hope to identify from our flow techniques.}
 \label{fig:large-inside}
\end{figure}

\subsection{Other good sets}
We highlight a few other low-conductance sets we identified in the course of our experiments in Figure~\ref{fig:more1} and Figure~\ref{fig:more2}. In the first figure, we show another set available from the spectral method that makes a boundary error in the other direction and ends up too far inside the set. A closely related set in Figure~\ref{fig:more1} is, perhaps, the optimal set contained within the the target ventricle. It has the lowest conductance score of any set we ever computed. One challenge with using the flow-based methods such as \alg{SimpleLocal} is that they tend to quickly contract to very good, small sets. For instance, there is a set of $295$ vertices with very good conductance (Figure~\ref{fig:more2}). If the parameter $\delta$ is set too high, then this often causes the flow-based method to contract too much (e.g.~we over-regularize) and identify a very precise small set. This feature could be useful in some applications where the conductance measure is a very good proxy for the desired output.

\begin{figure}[tb]
 \subfigure[Optimized spectral]{\includegraphics[width=0.45\linewidth]{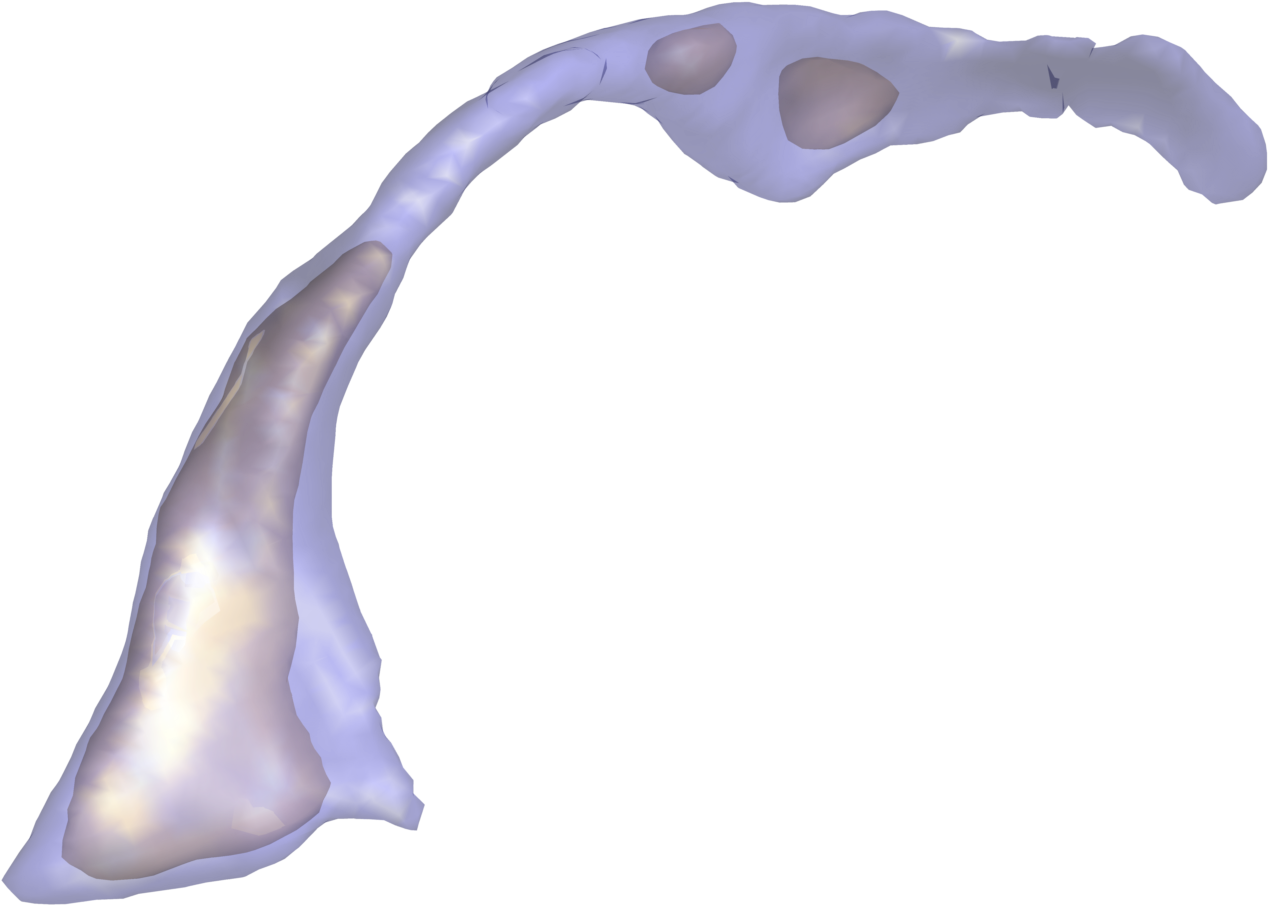}}
 \subfigure[Component of Refined]{\includegraphics[width=0.45\linewidth]{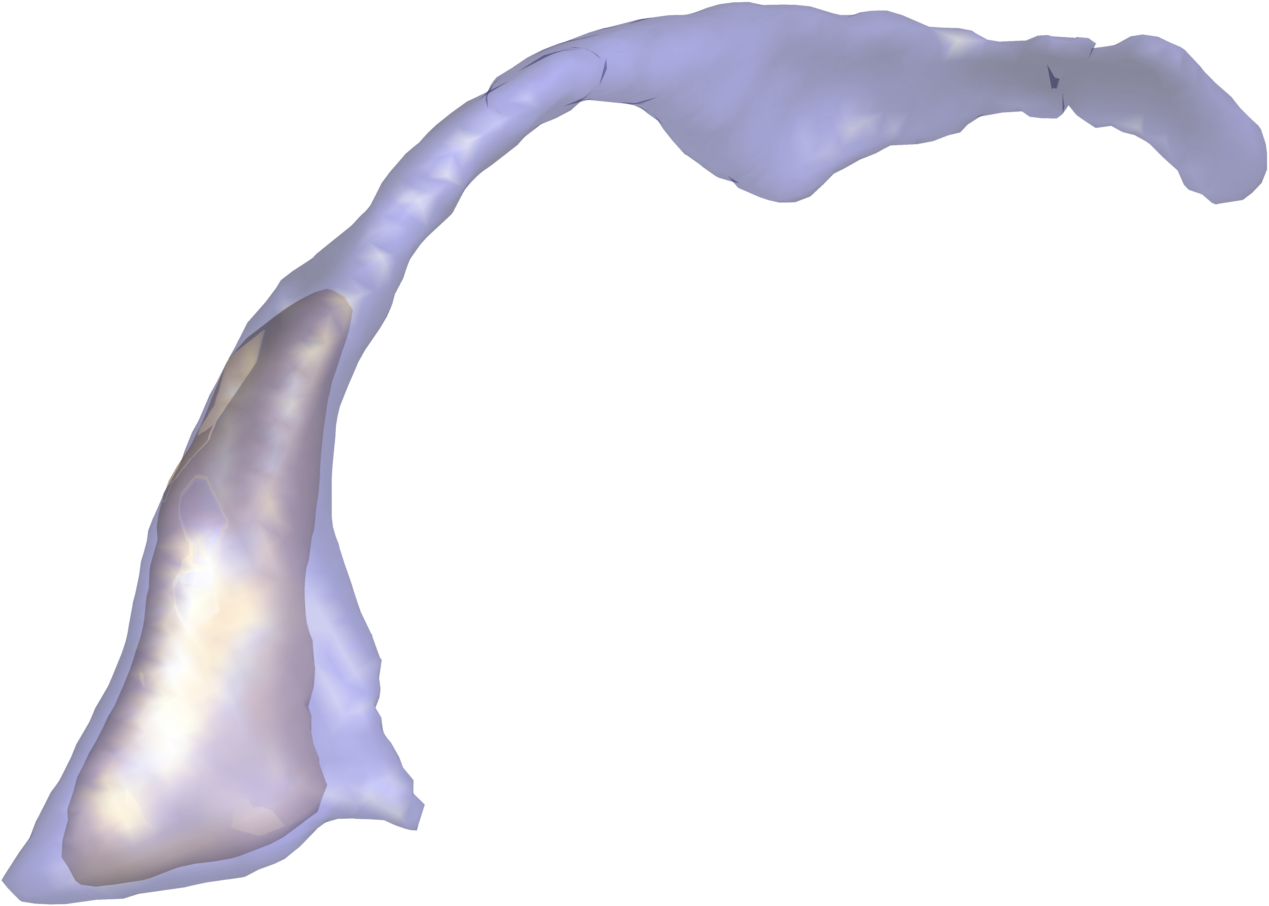}}
 \caption{At left, we have another set from spectral that identifies a low-conductance set nearly strictly inside. At right, we show the best subset of the disconnected region identified by \alg{SimpleLocal} and the refinement procedure. The spectral set has conductance $0.079$ and the \alg{SimpleLocal} component has conductance $0.0398$. Note that the spectral set does not hug the boundary nearly as closely as the results from the \alg{SimpleLocal} method in the main paper.}
 \label{fig:more1}
\end{figure}

\begin{figure}[t]
 \includegraphics[width=\linewidth]{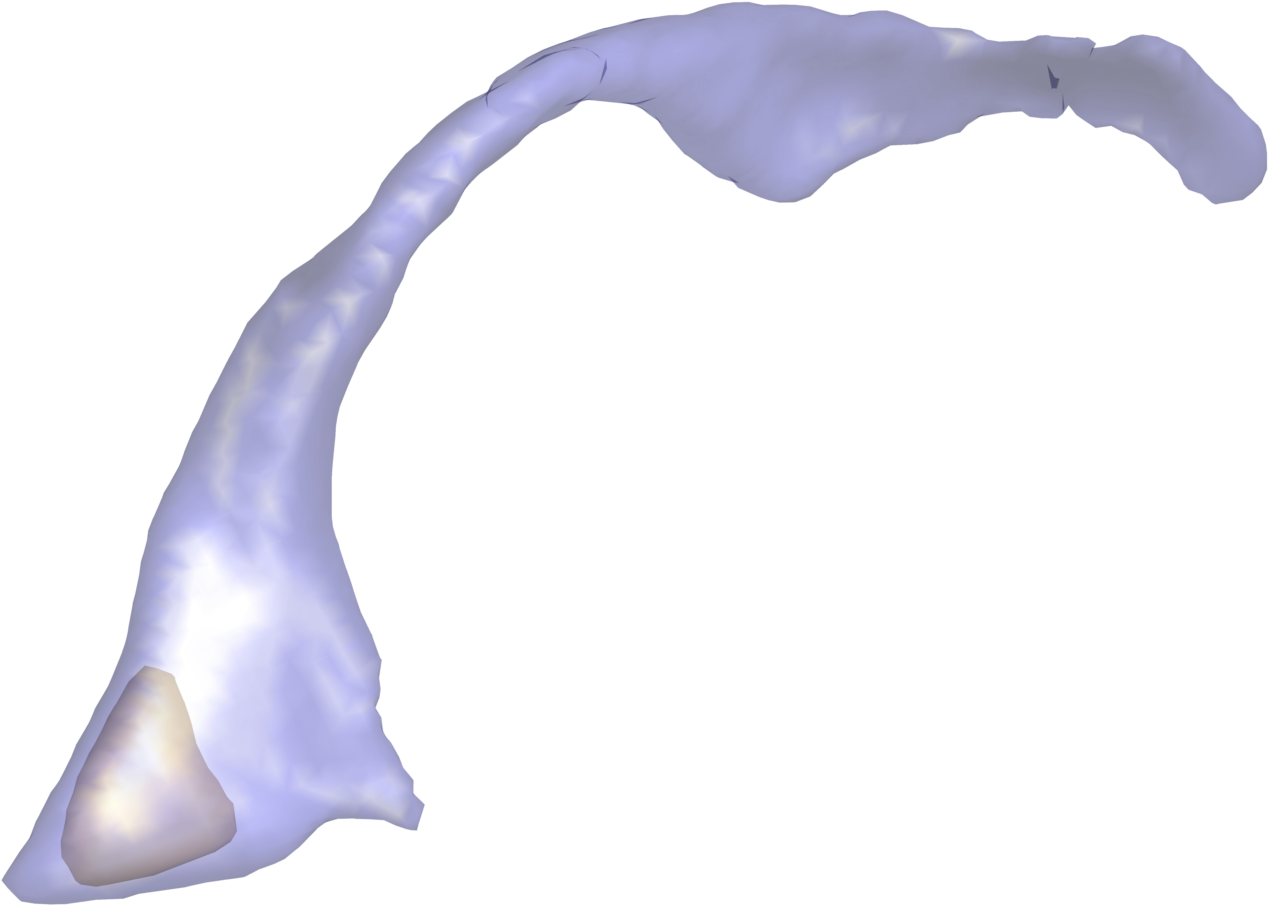}
 \caption{A tiny set of 295 vertices with conductance $0.048$ buried deep within the ventricle. This set often attracts the flow-based method if the value of $\delta$ is set too high.}
 \label{fig:more2}
\end{figure}

\end{document}